\documentclass[journal, onecolumn]{IEEEtran}

\ifCLASSINFOpdf
\else
\fi


\usepackage{graphicx}
\usepackage{graphics}
\usepackage{subfigure}
\usepackage{url}
\usepackage{verbatim} 
\usepackage{multicol}
\usepackage{array}
\usepackage{margins}
\usepackage{psfrag}
\usepackage{xcolor}
\usepackage{cite}
\usepackage{enumerate}

\usepackage{amsthm}
\usepackage{amsmath,amssymb}
\usepackage{amsfonts}

%
 {\everymath{\scriptscriptstyle\everymath{}}\array}%
 {\endarray}

\newcolumntype{C}[1]{>{\centering\let\newline\\\arraybackslash\hspace{0pt}}m{#1}}
\newcolumntype{L}[1]{>{\raggedright\let\newline\\\arraybackslash\hspace{0pt}}m{#1}}
\newcolumntype{R}[1]{>{\raggedleft\let\newline\\\arraybackslash\hspace{0pt}}m{#1}}

\newcommand{\braceup}[4]{\draw[decorate, decoration={brace, amplitude=5pt},thick] ([xshift=0.5mm,yshift=#3]#1.north west)--([xshift=-0.5mm,yshift=#3]#2.north east) node[midway,anchor=south,outer sep=2mm] {#4}}
\newcommand{\bracedn}[4]{\draw[decorate, decoration={brace, amplitude=5pt},thick] ([xshift=-0.5mm,yshift=#3]#2.south east)--([xshift=0.5mm,yshift=#3]#1.south west) node[midway,anchor=north,outer sep=2mm] {#4}}
\newcommand{\dimup}[4]{\draw[<->,thick] ([xshift=0.5mm,yshift=#3]#1.north west)--([xshift=-0.5mm,yshift=#3]#2.north east) node[midway,anchor=south] {#4}}
\newcommand{\dimdn}[4]{\draw[<->,thick] ([xshift=-0.5mm,yshift=#3]#2.south east)--([xshift=0.5mm,yshift=#3]#1.south west) node[midway,anchor=north] {#4}}
\definecolor{light-gray}{gray}{0.75}

\usepackage{tikz}
\usetikzlibrary{shapes,arrows,shadows,positioning,decorations.pathreplacing,trees,calc}
\tikzstyle{sym} = [draw, thick, rectangle, font=\small, minimum width=4mm, minimum height=4mm, text centered]
\tikzstyle{esym} = [sym, fill=light-gray]
\tikzstyle{usym} = [sym, fill=white]
\tikzstyle{diagbox} = [draw, rectangle, font=\footnotesize, fill=white, text centered, rounded corners]
\tikzstyle{codebox} = [draw, rectangle, font=\footnotesize, minimum height=7mm, fill=white, text centered]
\def\mystrut{\vrule height 0.3cm depth 0.2cm width 0pt}
\def\mystrutt{\vrule height 0.3cm depth 0.2cm width 0pt}

\usetikzlibrary{calc}
\usetikzlibrary{decorations.pathreplacing}
\newcommand{\tikzmark}[1]{\tikz[overlay,remember picture] \node (#1) {};}
\newcommand*{\BraceAmplitude}{0.5em}
\newcommand*{\VerticalOffset}{1.2ex}
\newcommand*{\HorizontalOffset}{0.8em}
\newcommand*{\InsertUnderBrace}[4][]{%
    \begin{tikzpicture}[overlay,remember picture]
\draw [decoration={brace,amplitude=\BraceAmplitude},decorate, thick,draw=blue,text=black,#1]
        ($(#3)+(\HorizontalOffset,-\VerticalOffset)$) -- 
        ($(#2)+(-\HorizontalOffset,-\VerticalOffset)$)
        node [below=\VerticalOffset, midway] {#4};
    \end{tikzpicture}%
}%

\pgfdeclarepatternformonly{north east lines wide}{\pgfqpoint{-1pt}{-1pt}}{\pgfqpoint{4pt}{4pt}}{\pgfqpoint{3pt}{3pt}}%
{
  \pgfsetlinewidth{0.7pt}
  \pgfpathmoveto{\pgfqpoint{0pt}{0pt}}
  \pgfpathlineto{\pgfqpoint{9.1pt}{9.1pt}}
  \pgfusepath{stroke}
}

\tikzset{%
    body/.style={inner sep=0pt,outer sep=0pt,shape=rectangle,draw,thick,pattern=north east lines wide},
    dimen/.style={<->,>=latex,thin,every rectangle node/.style={fill=white,midway,font=\sffamily}},
    symmetry/.style={dashed,thin},
}

\pgfdeclarepatternformonly{soft horizontal lines}{\pgfpointorigin}{\pgfqpoint{100pt}{1pt}}{\pgfqpoint{100pt}{2pt}}%
{
  \pgfsetstrokeopacity{0.3}
  \pgfsetlinewidth{0.5pt}
  \pgfpathmoveto{\pgfqpoint{0pt}{0.5pt}}
  \pgfpathlineto{\pgfqpoint{100pt}{0.5pt}}
  \pgfusepath{stroke}
}


\hyphenation{Information Theory}



%
%


\newcounter{actr}
{\begin{list}{(\alph{actr})}{\usecounter{actr}}}{\end{list}}

\newcounter{ictr}
{\begin{list}{(\roman{ictr})}{\usecounter{ictr}}}{\end{list}}

\newtheorem{remark}{Remark}
\newtheorem{thm}{Theorem}
\newtheorem{lemma}{Lemma}

\newtheorem{defn}{Definition}

\newtheorem{exm}{Example}
\newenvironment{new-proof}[1]
{{\em Proof }:\\}%
{ \noindent\qed }
%




\newcommand{\defeq}{\stackrel{\Delta}{=}}

\hyphenation{or-tho-nor-mal}
\hyphenation{wave-let wave-lets}






\newcommand{\mrm}{\mathrm}




\newcommand{\cC}{{\mathcal{C}}}


\newcommand{\cE}{{\mathcal{E}}}


\newcommand{\bG}{{\mathbf{G}}}

\newcommand{\bH}{{\mathbf{H}}}

\newcommand{\bI}{{\mathbf{I}}}


\newcommand{\bp}{{\mathbf{p}}}

\newcommand{\bq}{{\mathbf{q}}}

\newcommand{\bs}{{\mathbf{s}}}

\newcommand{\cS}{{\mathcal{S}}}

\newcommand{\bu}{{\mathbf{u}}}

\newcommand{\bv}{{\mathbf{v}}}

\newcommand{\bx}{{\mathbf{x}}}

\newcommand{\cX}{{\mathcal{X}}}

\newcommand{\by}{{\mathbf{y}}}


\newcommand{\al}{\alpha}

\newcommand{\eps}{\varepsilon}




\DeclareMathAlphabet{\mathbsf}{OT1}{cmss}{bx}{n}
\DeclareMathAlphabet{\mathssf}{OT1}{cmss}{m}{sl}

\DeclareSymbolFont{bsfletters}{OT1}{cmss}{bx}{n}
\DeclareSymbolFont{ssfletters}{OT1}{cmss}{m}{n}
\DeclareMathSymbol{\bsfGamma}{0}{bsfletters}{'000}
\DeclareMathSymbol{\ssfGamma}{0}{ssfletters}{'000}
\DeclareMathSymbol{\bsfDelta}{0}{bsfletters}{'001}
\DeclareMathSymbol{\ssfDelta}{0}{ssfletters}{'001}
\DeclareMathSymbol{\bsfTheta}{0}{bsfletters}{'002}
\DeclareMathSymbol{\ssfTheta}{0}{ssfletters}{'002}
\DeclareMathSymbol{\bsfLambda}{0}{bsfletters}{'003}
\DeclareMathSymbol{\ssfLambda}{0}{ssfletters}{'003}
\DeclareMathSymbol{\bsfXi}{0}{bsfletters}{'004}
\DeclareMathSymbol{\ssfXi}{0}{ssfletters}{'004}
\DeclareMathSymbol{\bsfPi}{0}{bsfletters}{'005}
\DeclareMathSymbol{\ssfPi}{0}{ssfletters}{'005}
\DeclareMathSymbol{\bsfSigma}{0}{bsfletters}{'006}
\DeclareMathSymbol{\ssfSigma}{0}{ssfletters}{'006}
\DeclareMathSymbol{\bsfUpsilon}{0}{bsfletters}{'007}
\DeclareMathSymbol{\ssfUpsilon}{0}{ssfletters}{'007}
\DeclareMathSymbol{\bsfPhi}{0}{bsfletters}{'010}
\DeclareMathSymbol{\ssfPhi}{0}{ssfletters}{'010}
\DeclareMathSymbol{\bsfPsi}{0}{bsfletters}{'011}
\DeclareMathSymbol{\ssfPsi}{0}{ssfletters}{'011}
\DeclareMathSymbol{\bsfOmega}{0}{bsfletters}{'012}
\DeclareMathSymbol{\ssfOmega}{0}{ssfletters}{'012}

\renewcommand{\defeq}{\triangleq}




































\begin{document}
\title{Robust Streaming Erasure Codes based on Deterministic Channel Approximations}


\author{Ahmed Badr,
           Ashish Khisti,        
           Wai-Tian Tan, and
           John Apostolopoulos
\thanks{Ahmed Badr and Ashish Khisti are with the
of Electrical and Computer Engineering Department, University of Toronto, Toronto, ON, Canada.
Wai-Tian Tan and John Apostolopoulos are with the Hewlett-Packard Laboratories, Palo Alto, California. }
}

\maketitle

\begin{abstract}
We study near optimal error correction codes for real-time communication.
In our setup the encoder must operate on an incoming source stream in a sequential manner,
and the decoder must reconstruct each source packet within a fixed playback deadline
of $T$ packets. The underlying channel is a packet erasure channel that can introduce
both burst and isolated losses.

We first consider a class of channels that in any window of length ${T+1}$
introduce either a single erasure burst of a given maximum length $B,$ or a 
certain maximum number $N$ of isolated erasures. We demonstrate that
for a fixed rate and delay, there exists a tradeoff between the achievable values
of $B$ and $N,$ and propose a family of codes that is near optimal with respect
to this tradeoff. We also consider another class of channels that introduce both
a burst  {\em and} an isolated loss in each window of interest and develop the 
associated streaming codes.

All our constructions are based on a layered design and provide
significant improvements over baseline codes
in simulations over the Gilbert-Elliott channel.

\end{abstract}

\IEEEpeerreviewmaketitle

\section{Introduction}
Many emerging multimedia applications require error correction of streaming sources under strict latency constraints.
The transmitter must encode a source stream sequentially and the receiver must decode each source packet within a fixed playback deadline.  Interactive
audio/video conferencing, mobile gaming and cloud-computing are some applications of such systems.
Classical error correction codes are far from ideal in these applications.
The encoders operate on messages in blocks and introduce buffering delays, whereas the 
decoders can only recover missing packets simultaneously without considering the deadline of each packet.
Naturally both the optimal structure and the  fundamental limits of {\em streaming codes} are expected to be different from classical error correction codes. For example it is well known  that the Shannon capacity of an erasure channel only depends on the fraction of erasures. However when delay constraints are imposed, the actual pattern of packet losses also becomes relevant. The decoding delay over channels which introduce burst losses can be very different than  over channels which only introduce isolated losses. In practice  channels such as the Gilbert-Elliott (GE) channel~\cite{gilbert,elliott} introduce both burst and isolated losses. The central question we address in this paper is how to construct streaming codes that significantly outperform classical error correction codes over such channels.

We consider a class of channels that are simplifications of the GE  channel.  Such deterministic models are restricted to  introduce only a certain class  erasure patterns,  which correspond the {\em dominant set} of error events associated with the original channels. We construct near optimal codes for such deterministic approximations and then demonstrate that the resulting codes also yield significant performance gains over the  GE channel in simulations. Note that the GE channel is a two-state Markov model. When the GE channel is in the ``bad state," it introduces a burst-loss whereas when it is in the ``good state" it introduces isolated losses. Therefore a natural approximation to this channel is the following: in any sliding window of a given length $W$, the channel introduces either an erasure burst of a certain maximum length $B$ or up to a certain number $N$ of isolated erasures.   For such channels, we show that for a given rate and delay, there exists an inherent tradeoff between the achievable values of $B$ and $N$. We further propose a class codes --- MiDAS Codes --- that are near optimal with respect to this tradeoff. Our construction  is based on a layered design. We first construct an optimal streaming code for the burst-erasure channel and then introduce another  layer of parity checks for recovery from isolated erasures. 

The above deterministic channel is further improved to capture erasure patterns that appear during the  transition  from the bad state to  good state and vice versa. These channels involve both burst and isolated erasures in the window of interest. We  propose another class of codes --- partial recovery codes (PRC)  --- that recover most of the erased source packets over such channels. We observe in our simulations that when the delay is relatively long, such patterns are dominant and the PRC construction indeed outperforms MiDAS as well as other baseline codes.

In related works,  suitable adaptations of block codes streaming applications have been studied in many prior works see, e.g.,~\cite{roundrobin} and references therein. In reference~\cite{MartinianS04, MartinianT07} a class of optimal streaming codes are proposed for burst erasure channels. Unfortunately these constructions are  sensitive to isolated packet losses. Reference~\cite{MartinianS04} also presents some examples of robust codes using a computer search, but offers limited insights towards a general construction.  In contrast  the present paper proposes a systematic construction of robust streaming codes based on a layered design, establishes fundamental bounds and in the process verifies that some of the robust constructions proposed in~\cite{MartinianS04} are also optimal. Recently connections between streaming codes and  network coding have been studied in~\cite{tekin,leong}. 
However the models in~\cite{tekin,leong} do not aim for robust constructions over the GE channel based on a layered architecture, which is the focus of this paper. 

The rest of the paper is organized as follows. The system model is presented in section~\ref{sec:model} and some baseline codes from earlier works are discussed in section~\ref{prelim}. We introduces our main constructions in sections~\ref{sec:chan-I} and~\ref{sec:chan-2} respectively, and present the simulation results in section~\ref{sec:GE}. 

\section{System Model}
\label{sec:model}
\begin{center}
\begin{figure}[t]
\centering
\begin{minipage}{.48\textwidth}
	\centering
	\resizebox{\columnwidth}{!}{
	\begin{tikzpicture}[node distance=0mm]
		\node[esym]  (x100) {$0$};
		\node[esym, right = of x100]     (x101) {$1$};
		\node[esym, right = of x101]     (x102) {$2$};
		\node[usym, right = of x102]     (x103) {$3$};
		\node[usym, right = of x103]     (x104) {$4$};
		\node[usym, right = of x104]     (x105) {$5$};
		\node[esym, right = of x105]     (x106) {$6$};
		\node[usym, right = of x106]     (x107) {$7$};
		\node[esym, right = of x107]     (x108) {$8$};
		\node[usym, right = of x108]     (x109) {$9$};
		\node[usym, right = of x109]     (x110) {$10$};
		\node[usym, right = of x110]     (x111) {$11$};
		\node[esym, right = of x111]     (x112) {$12$};
		\node[esym, right = of x112]     (x113) {$13$};
		\node[esym, right = of x113]     (x114) {$14$};
		\node[usym, right = of x114]     (x115) {$15$};
		\node      [right = of x115]     (x1end) {$\cdots$};		
	\end{tikzpicture}}
	\caption{An Example of Channel $\rm{I}$:  In any sliding window of length $W=5$ there is either a single erasure burst of length no greater than $B=3$ or up-to $N=2$ erasures.}
	\label{fig:chan-mix}
\end{minipage}%
\hspace{1em}
\begin{minipage}{.48\textwidth}
  \centering
	\resizebox{\columnwidth}{!}{
	\begin{tikzpicture}[node distance=0mm]
		\node[esym]  (x100) {$0$};
		\node[esym, right = of x100]     (x101) {$1$};
		\node[esym, right = of x101]     (x102) {$2$};
		\node[usym, right = of x102]     (x103) {$3$};
		\node[esym, right = of x103]     (x104) {$4$};
		\node[usym, right = of x104]     (x105) {$5$};
		\node[usym, right = of x105]     (x106) {$6$};
		\node[esym, right = of x106]     (x107) {$7$};
		\node[esym, right = of x107]     (x108) {$8$};
		\node[esym, right = of x108]     (x109) {$9$};
		\node[esym, right = of x109]     (x110) {$10$};
		\node[usym, right = of x110]     (x111) {$11$};
		\node[usym, right = of x111]     (x112) {$12$};
		\node[esym, right = of x112]     (x113) {$13$};
		\node[usym, right = of x113]     (x114) {$14$};
		\node[esym, right = of x114]     (x115) {$15$};
		\node      [right = of x115]     (x1end) {$\cdots$};		
	\end{tikzpicture}}
	\caption{An Example of Channel $\rm{II}$:  In any sliding window of length $W=5$ there is either a single erasure burst of length up to $B=3$ and possibly one isolated erasure, or $N=2$ isolated erasures. }
	\label{fig:chan-mix2}
	\end{minipage}%
\end{figure}
\end{center}
We consider a class of packet erasure channels where the erasure patterns are locally constrained. 
In any sliding window of length $W$, the channel can introduce only one of the following patterns:
\begin{itemize}
\item A single erasure burst of maximum length $B$ plus a maximum of $K$ isolated erasures or,
\item A maximum of $N$ erasures in arbitrary locations.
\end{itemize}
Note that ${N \le B+K}$. We denote such a channel  by $\cC(N,B,K,W)$.
We will focus on two special subclasses of such channels. 
The first class, Channel $\rm{I}$ is given by: $\cC_{\mrm{I}}(N,B,W) \defeq \cC(N,B,0,W),$ i.e., it only introduces
either a burst erasure or  up to $N$ arbitrary erasures. The second class, Channel $\rm{II}$ is given by:
${\cC_{\mrm{II}}(N,B,W) \defeq \cC(N,B,1,W)}$. It allows for one burst erasure of maximum length $B$ plus up to one isolated erasure,
or $N$ arbitrary erasures.  These specific channels are inspired by the dominant erasure events associated with GE channel as discussed in  section~\ref{sec:GE}. Fig.~\ref{fig:chan-mix} and~\ref{fig:chan-mix2} provide examples of channels $\cC_\mrm{I}(2,3,5)$ and $\cC_\mrm{II}(2,3,5)$ respectively. 

Clearly the erasure patterns associated with channel $\cC_\mrm{II}$ include
those associated with $\cC_\mrm{I}$. However we focus on channel $\cC_\mrm{I}$ first as it is simpler to analyze and reveals
some important insights. In particular for this class of channels we propose a near optimal class of codes based on a layered design. 
We first construct an optimal code for the burst erasure channel without considering the value of $N$. Thereafter we concatenate an additional layer of parity checks to recover from the isolated erasure patterns. Inspired by this result, for the channel $\cC_{\mrm{II}},$ we again construct a streaming code for the burst-erasure channel. A similar layering principle can be applied to treat isolated losses, although this extension is not discussed explicitly in the paper.

We next formally define a  {\em{streaming erasure code}}.
At each time ${i \ge 0}$, the encoder observes a source symbol $\bs[i]$, drawn
from a source alphabet\footnote{For keeping the notation compact, any symbol with a negative time-index will be assumed to be the constant zero symbol.} ${\cal S}$
and generates a channel symbol 
$\bx[i] = f_i(\bs[0], \ldots, \bs[i]) \in \cX$. 
The channel output is  either $\by[i] =\bx[i]$ or $\by[i] = \star,$ when the output is erased. The decoder is required
to reconstruct each packet with a delay of $T$ units i.e., for each $i\ge 0$ there exists a decoding function:
$\bs[i] = g_i(\by[0], \ldots, \by[i+T]).$ A rate $R = \frac{H(\bs)}{\log_2|\cX|}$ is achievable if there exists a feasible code that recovers every erased symbol $\bs[i]$ by time\footnote{Since we assume a deterministic channel model, it suffices to consider zero error probability. In our analysis we will assume that $\cS = {\mathbb F}_q^k$ and $\cX = {\mathbb F}_q^n$ i.e., the source and channel packets are vectors of lengths $k$ and $n$ respectively over some base field. We will assume that the field size, $q$ can be as large as required, and furthermore that $k$ and $n$ can also be as large as necessary with the rate given by $R = \frac{k}{n}$.} ${i+T}$.


\section{Preliminaries}
\label{prelim}
In this section we consider some previously studied  error correction codes and characterize their performance in our proposed setup. We discuss the special cases under which these codes are optimal. Our new constructions  use these codes as building blocks and therefore the review of these codes is useful.

\subsection{Strongly-MDS Codes}
\label{subsec:BEC}
Classical erasure codes are designed for maximizing the underlying distance properties.  Roughly speaking, such codes will recover all the missing source symbols simultaneously once sufficiently many parity checks have been received at the decoder. Indeed a commonly used family of such codes,  {\em random-linear codes}, see e.g.,~\cite{ho-med, roundrobin}, are designed to guarantee that the underlying system of equations is nearly of a full rank.
We discuss one particular class of {\em deterministic code} constructions with optimal distance properties  below.

Consider a $(n,k,m)$ convolutional code that maps an input source stream $\bs[i] \in {\mathbb F}_q^k$ to an output $\bx[i] \in {\mathbb F}_q^n$ using a memory $m$ encoder\footnote{We use $^\dagger$ to denote the vector/matrix transpose operation. In particular we will treat $\bs[i]$ and $\bx[j]$ as column vectors and therefore $\bs^\dagger[i]$ and $\bx^\dagger[j]$ denote the associated row vectors.}  i.e., 
\begin{align}
\bx[i] = \left(\sum_{t=0}^m \bs^\dagger[{i-t}] \cdot \bG_{t}\right)^\dagger, \label{eq:conv-code}
\end{align}
where $\bG_0,\ldots, \bG_m$ are ${k\times n}$  matrices with elements in ${\mathbb F}_q$.  
The first ${j+1}$ output symbols can be expressed as,
\begin{align}
\label{eq:trunc-cc}
\big[\bx^\dagger[0], \bx^\dagger[1], \ldots, \bx^\dagger[j]\big] = \big[\bs^\dagger[0], \bs^\dagger[1], \ldots, \bs^\dagger[j]\big] \cdot \bG_j^s.
\end{align}
where 
\begin{equation}\bG_j^s \defeq \begin{bmatrix}\bG_0 & \bG_1 & \ldots & \bG_j \\  0 & \bG_0 &  & \bG_{j-1} \\ \vdots & &\ddots & \vdots \\ 0 & & \ldots & \bG_0 \end{bmatrix} \in {\mathbb F}_q^{(j+1)k\times (j+1)n}\label{eq:GsT}\end{equation}
is the truncated generator matrix to the first ${j+1}$ columns and $\bG_j ={\bf 0}_{k\times n}$ if $j > m$. Furthermore the convolutional code is systematic if we can express each sub-generator matrix in the following form:
\begin{align}
\label{eq:systematic}
\bG_0=[\bI_{k\times k}~ {\bf 0}_{k \times n-k}], \qquad \bG_i = [{\bf 0}_{k \times k}~\bH_i],~i =1,\ldots, T
\end{align}
where $\bI_{k\times k}$ denotes the ${k\times k}$ identity matrix, ${\bf 0}$ denotes the zero matrix, and each ${\bH_i \in {\mathbb F}_q^{k \times (n-k)}}$.  For a systematic convolutional code, Eq.~\eqref{eq:conv-code} reduces to 
\begin{align}
\bx[i] = \left[\begin{array}{c}\bs[i] \\ \bp[i]\end{array}\right], \qquad \bp[i] = \left(\sum_{t=1}^m \bs^\dagger[i-t] \cdot\bH_t\right)^\dagger.  \label{eq:conv-code-sys}
\end{align}

We are particularly interested in a class of Strongly-MDS codes\footnote{In this paper we will only treat systematic Strongly-MDS Codes and for convenience drop the term ``systematic" when referring to these codes.}~\cite{strongly-mds} with the following properties.
\begin{lemma}
A $(n,k,m)$ (systematic) Strongly-MDS code has the following properties for each $j =0,1,\ldots,m$:
\begin{enumerate}
\item[P1.] Suppose that in the window $[0,j],$ there are no more than $(1-R)(j+1)$ erasures in arbitrary locations, then $\bs[0]$ is recovered by time $t=j$.
\item[P2.] Suppose an erasure burst happens in the interval $[0,B-1],$ where $B \le (1-R)(j+1)$, then all the symbols $\bs[0],\ldots, \bs[B-1]$ are simultaneously recovered at time $t=j$.
\end{enumerate}
\label{lem:mds}
\end{lemma}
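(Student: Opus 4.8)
The plan is to obtain both statements from the single defining property of a Strongly-MDS code: its $j$-th column distance meets the generalized Singleton bound, $d_j^c = (n-k)(j+1)+1$ for $0\le j\le m$, or equivalently that the truncated sliding generator matrices $\bG_j^s$ of~\eqref{eq:GsT} are superregular in the sense of~\cite{strongly-mds}. The first step I would take is to translate the packet-level erasure counts in the statement into symbol-level counts: an erasure of $e$ packets inside $[0,j]$ removes $en$ channel symbols, and the hypothesis $e \le (1-R)(j+1)$ is exactly $en \le (n-k)(j+1) = d_j^c - 1$.

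For P1 I would use the standard uniqueness argument. Suppose two source streams $\bs$ and $\bs'$ are both consistent with the unerased symbols of $\by[0],\dots,\by[j]$ but $\bs[0]\ne\bs'[0]$. By linearity the codeword generated by $\bd := \bs-\bs'$, truncated to $[0,j]$, is supported only on the erased coordinates, hence has Hamming weight at most $en \le (n-k)(j+1)$; since $\bd[0]\ne 0$ this contradicts $d_j^c=(n-k)(j+1)+1$. Therefore $\bs[0]$ is uniquely determined by $(\by[0],\dots,\by[j])$, i.e.\ recovered by time $j$. Being systematic enters only implicitly, in that the unerased source symbols are read off directly by the decoder.

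For P2 I would not shift the time origin and reuse P1 --- more on that below --- but instead write down the single linear system for all $Bk$ unknown erased source symbols $\bs[0],\dots,\bs[B-1]$. Using the systematic form~\eqref{eq:conv-code-sys}, the $j+1-B$ unerased parity blocks $\bp[B],\dots,\bp[j]$ yield $(j+1-B)(n-k)$ linear equations once the contributions of the known (unerased) source symbols are moved to the right-hand side; since $j\le m$, every sub-matrix $\bH_1,\dots,\bH_j$ appearing here lies within the encoder memory. The coefficient matrix is the submatrix of $\bG_j^s$ with block-rows $\{0,\dots,B-1\}$ and block-columns $\{B,\dots,j\}$, whose nonzero part is the block matrix with $(r,c)$ block equal to $\bH_{c-r}$. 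A one-line count shows that $(j+1-B)(n-k)\ge Bk$ is precisely the hypothesis $B\le(1-R)(j+1)$, so there are at least as many equations as unknowns, and the superregularity of a Strongly-MDS code forces this submatrix to have full column rank $Bk$. The system then has a unique solution, so $\bs[0],\dots,\bs[B-1]$ are all recovered at time $j$.

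I expect the last step of P2 to be the real obstacle: arguing that the relevant submatrix of the generator is full rank is not bookkeeping but is the burst/MDP face of the superregularity property, and the cleanest route is to quote the appropriate statement from~\cite{strongly-mds} rather than reprove it. A secondary trap worth flagging in the write-up is the tempting shift-the-origin reduction of P2 to P1: recovering $\bs[t]$ from the window $[t,j]$ would require the encoder state at time $t-1$, which itself depends on the still-unknown $\bs[0],\dots,\bs[t-1]$, so that reduction fails; treating all $Bk$ erased symbols as one linear system is exactly what sidesteps it.
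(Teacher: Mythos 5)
Your P1 argument matches the paper's exactly: translate $(1-R)(j+1)$ packet erasures into a sub-symbol count below $d_j^c$ and apply the standard column-distance uniqueness argument. (The paper additionally repackages $d_j^c$ into a packet-level distance $d_j$ with a floor/ceiling step, but that is bookkeeping.)

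Your P2 argument is a genuinely different route from the paper's. The paper does not form a single linear system for all $Bk$ unknowns; it instead constructs a \emph{periodic} erasure pattern of period $j+1$ and recovers $\bs[0],\ldots,\bs[B-1]$ one at a time by sliding the window: in $W_i=[i,i+j]$ the already-recovered $\bs[0],\ldots,\bs[i-1]$ are cancelled, P1 is applied, and since the extra positions $j+1,\ldots,i+j$ are erased under the periodic pattern they contribute no information, so $\bs[i]$ is in fact determined by $\by[0],\ldots,\by[j]$. So, contrary to the "trap" you flag, the paper \emph{does} shift the origin — the objection you raise (unknown encoder state at time $i-1$) is neutralized because $\bs[0],\ldots,\bs[i-1]$ are recovered before the shift and their effect is subtracted. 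Your single-shot linear-algebra alternative avoids that subtlety entirely, which is a real advantage, but the step "superregularity forces this submatrix to have full column rank" is looser than it needs to be: the superregularity statement in~\cite{strongly-mds} concerns specific $(j+1)k\times(j+1)k$ minors of the full $\bG_j^s$ subject to index constraints, not arbitrary submatrices, so quoting it here requires checking those constraints. A cleaner way to close your argument is to deduce the full-rank claim directly from the column-distance bound itself: a nonzero kernel element gives a source sequence supported on $[t_0,B-1]$ (with $t_0$ the first nonzero index) whose codeword over $[t_0,j]$ has $\bp[B]=\cdots=\bp[j]=0$ and hence total sub-symbol weight at most $(B-t_0)n$, but $B\le(1-R)(j+1)$ forces $(B-t_0)n<(n-k)(j-t_0+1)+1=d^c_{j-t_0}$, a contradiction. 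With that fix your P2 proof is sound and uses the same key input (the MDS column-distance bound) as the paper, just in one shot rather than iteratively.
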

\begin{proof}
See Appendix~\ref{app:MDS}.
\end{proof}

\vspace{1em}

The proof of Lemma~\ref{lem:mds} involves applying the properties of  Strongly-MDS codes proposed in~\cite{strongly-mds}. While the results in~\cite{strongly-mds} are stated for channels that can erase each sub-symbol in $\bx[i],$ they can be suitably adapted for our proposed setup. We delegate the proof to Appendix~\ref{app:MDS}.

As a direct consequence  of Lemma~\ref{lem:mds}, it can be seen that a rate $R$, Strongly-MDS code, can achieve any $N = B \le (1-R)(T+1)$ over the  $\cC_\mrm{I}(N,B,T+1)$ channel. As will be shown subsequently (c.f.~Theorem~\ref{thm:Chan-1-UB}) this is in fact the maximum value of $N$ that can be achieved. Nevertheless the largest value of $B$ can be higher. The class of codes treated in the following sub-section achieve the largest possible value of $B$.
\subsection{Maximally Short Codes}
\label{subsec:MS}
Maximally Short codes, introduced in~\cite{MartinianS04}, are streaming codes that correct the longest possible erasure burst in any sliding window of length ${T+1}$.  In particular the following result was established in~\cite{MartinianS04, MartinianT07}.
\begin{lemma}[Martinian and Sundberg~\cite{MartinianS04}]
\label{lem:ms}
Consider any channel that in any window of length ${T+1}$ introduces a single erasure burst of length no more\footnote{This statement is equivalent to saying that each erasure burst is of length no greater than $B$ and that successive bursts are separated by a distance of at-least $T$ time units. } than $B$. For any $(n,k,m)$ convolutional code which recovers every source packet $\bs[i]$ by time ${t=i+T}$ we must have that
\begin{align}
B \le T\cdot \min\left(1, \frac{1-R}{R}\right)\label{eq:B-UB}
\end{align}
Furthermore the upper bound in~\eqref{eq:B-UB} can be attained by the Maximally Short (MS) Codes.
\end{lemma}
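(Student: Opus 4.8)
The statement bundles a converse --- the bound~\eqref{eq:B-UB} must hold for every rate-$R$, delay-$T$ convolutional code correcting such a channel --- with an achievability claim, namely that the Maximally Short construction of~\cite{MartinianS04} meets it. For achievability I would simply invoke the explicit MS construction of~\cite{MartinianS04, MartinianT07} rather than rebuild it; the substantive part is the converse, which I would split into the two inequalities $B \le T$ and $B \le T\,\frac{1-R}{R}$, whose minimum is exactly the right-hand side of~\eqref{eq:B-UB} (they coincide at $R = \tfrac12$, the first binding for $R \le \tfrac12$ and the second for $R \ge \tfrac12$).

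To get $B \le T$: suppose $B \ge T+1$ and feed the code the realization whose only erasure is the single burst deleting $\bx[0], \ldots, \bx[B-1]$. This is admissible, since every length-$(T+1)$ window then sees one erasure burst of length at most $T+1 \le B$. But $\bs[0]$ must now be produced from $\by[0], \ldots, \by[T]$, all of which equal $\star$ because $T \le B-1$, so with the convention that negatively-indexed packets are zero the decoding map $g_0$ has literally no information about $\bs[0]$ and cannot reconstruct a non-constant source --- contradiction. Hence $B \le T$.

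To get $B \le T\,\frac{1-R}{R}$: I would use the periodic burst channel $\mathcal{P}$ that erases the blocks $[\ell(T+B),\,\ell(T+B)+B-1]$, $\ell = 0,1,2,\ldots$. First check $\mathcal{P}$ is a legal realization of the hypothesis: consecutive bursts are separated by exactly $T$ clean packets (the period must be $T+B$, not $T+1$), so a window $[s,s+T]$ meeting burst $\ell{=}0$ needs $s \le B-1$ while one meeting burst $\ell{=}1$ needs $s \ge B$, hence no window meets two bursts, and each burst has length $B \le T$. Over the horizon $[0, L(T+B)-1]$ there are $LT$ clean packets, and the $T$ extra slots $[L(T+B), L(T+B)+T-1]$ that the last decoding deadline reaches contain at most $T$ further clean packets; so the clean channel symbols the decoder has seen by time $L(T+B)-1+T$ number at most $(L+1)Tn$ over $\mathbb{F}_q$. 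Since the code is zero-error over $\mathcal{P}$, for any fixed values of the ``future'' source packets $\bs[L(T+B)], \ldots, \bs[L(T+B)+T-1]$ the induced $\mathbb{F}_q$-linear map from $(\bs[0], \ldots, \bs[L(T+B)-1]) \in \mathbb{F}_q^{L(T+B)k}$ to the tuple of clean channel packets must be injective --- otherwise two source streams agreeing outside $[0, L(T+B)-1]$ but differing inside it would be indistinguishable to the decoder, contradicting recovery of the erased symbols. Injectivity forces $L(T+B)k \le (L+1)Tn$, and letting $L \to \infty$ yields $(T+B)k \le Tn$, i.e. $\frac{B}{T} \le \frac{n-k}{k} = \frac{1-R}{R}$. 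Combining, $B \le T\min\!\big(1, \tfrac{1-R}{R}\big)$.

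The conceptual content is light, so the effort goes into three careful points: (i) verifying that the period must be exactly $T+B$ for $\mathcal{P}$ to respect ``at most one burst of length $\le B$ per window of length $T+1$''; (ii) cleanly isolating the contribution of the erased source coordinates from that of the future packets $\bs[j]$, $j \ge L(T+B)$, so that the count is a genuine injectivity statement and the $(L+1)/L$ boundary factor vanishes in the limit; and (iii) the reduction from ``zero-error decoding with delay $T$'' to ``injectivity of the truncated encoder restricted to the erased coordinates.'' I would remark that a single burst already gives $Bk \le Tn$ (that is, $B \le T/R$) by the same reasoning applied to $[0, B-1+T]$, but only the periodic refinement extracts the tight factor $\frac{1-R}{R}$; for systematic codes one could instead sharpen the single-burst argument directly, since each clean packet's parity content occupies only $n-k$ coordinates. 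The matching construction --- the real labour behind the ``furthermore'' --- is the layered/diagonal MS code of~\cite{MartinianS04}, which I would cite rather than reproduce.
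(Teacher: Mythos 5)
Your converse argument is correct, and it is a genuinely different division of labour from the paper. The paper does not reprove the bound~\eqref{eq:B-UB}: it attributes both the upper bound and the matching construction to~\cite{MartinianS04}, and then spends its effort giving an alternative achievability construction (the Strongly-MDS-based MS code built from the split $\bs[i]\mapsto(\bu[i],\bv[i])$ and the superposition $\bq[i]=\bp_v[i]+\bu[i-T]$). You invert this: you cite~\cite{MartinianS04} for achievability and instead prove the converse from scratch via the period-$(T+B)$ guard-space channel and a counting argument. Your two steps are both sound. The admissibility check of $\mathcal P$ is exactly right --- the gap between bursts is $T$ clean packets, which is the minimum allowed by the footnote, and no length-$(T+1)$ window meets two bursts. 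Two small remarks on the counting step: (i) the lookahead horizon $[L(T+B),\,L(T+B)+T-1]$ actually contains only $T-B$ clean packets (a fresh burst starts at $L(T+B)$), so the tight count is $(LT+T-B)n$; your looser $(L+1)Tn$ still gives the same limit $(T+B)k\le Tn$. (ii) Linearity of the convolutional code is not really needed for the key inequality --- a plain cardinality count of the decoder's preimages already forces $q^{L(T+B)k}\le q^{(L+1)Tn}$; linearity just lets you phrase it as injectivity of a linear map restricted to the erased coordinates. Both your approach and the paper's are legitimate routes to the lemma.
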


The construction of MS codes presented in~\cite{MartinianS04,MartinianT07} involves first constructing a particular low-delay block code and then converting it into a convolutional code using a diagonal interleaving technique. This approach will not be used in this paper. Instead we discuss another construction~\cite{erlc-infocom} which is a slight generalization of the above approach.  In particular we show that for any $T$ and $B$ there exists a streaming code of rate $R = \frac{T}{T+B}$ that satisfies~\eqref{eq:B-UB} with equality. In the proposed construction we  split each source symbol $\bs[i] \in {\mathbb F}_q^{T}$ into two groups $\bu[i] \in {\mathbb F}_q^B$ and $\bv[i] \in {\mathbb F}_q^{T-B}$ as follows:
\begin{align}
\bs[i] = (\underbrace{u_0[i],\ldots, u_{B-1}[i]}_{=\bu[i]}, \underbrace{v_0[i],\ldots, v_{T-B-1}[i]}_{=\bv[i]})^\dagger.
\label{eq:s-split}
\end{align}
We apply a  $(T, T-B, T)$ Strongly-MDS code on the symbols $\bv[i]$ and generate parity check symbols
\begin{align}
\bp_v[i] = \left(\sum_{j=1}^{T} \bv^\dagger[i-j]\cdot \bH^v_j\right)^\dagger, \quad \bp_v[i] \in {\mathbb F}_q^{B}   \label{eq:pv-conc},
\end{align}
where the matrices $\bH_j^v$  are $(T-B) \times B$ matrices  associated with the Strongly-MDS code~\eqref{eq:systematic}.
Superimpose the $\bu[\cdot]$ symbols onto $\bp_v[\cdot]$ and let \begin{align}
\bq[i] = \bp_v[i] + \bu[i-T]. \label{eq:MS-q}
\end{align}The channel input at time $i$ is given by $\bx[i] = \left(\bu^\dagger[i],\bv^\dagger[i],\bq^\dagger[i]\right)^\dagger \in {\mathbb F}_q^{T+B}$. 

For decoding, we suppose an erasure burst of length $B$ spans the interval $[0,B-1]$. The decoder starts by recovering the parity check symbols $\bp_v[B],\dots,\bp_v[T-1]$ by subtracting the unerased symbols $\bu[B-T],\dots,\bu[-1]$, respectively. These recovered parity checks are used to recover the erased symbols $\bv[0],\dots,\bv[B-1]$ using property P2 in Lemma~\ref{lem:mds}. Now, the parity check symbols $\bp_v[j]$ for $j \ge T$ can be computed as they combine $\bv[\cdot]$ symbols which are either not erased or recovered in the previous step. Hence, these parity checks can be subtracted from the corresponding $\bq[\cdot]$ to recover $\bu[0],\dots,\bu[B-1]$ at times $T,\dots,T+B$, respectively, i.e., with a delay of $T$ symbols and the decoding is complete.

Unfortunately the MS Codes are not robust against isolated erasures. In fact it can be easily seen that for the $\cC_\mrm{I}(N, B, T+1)$
channel they only attain $N=1$.

\section{Channel $\rm{I}$: Code Constructions and Bounds}
\label{sec:chan-I}

Throughout the analysis of Channel $\rm{I}$, we select ${W=T+1}$ where recall that $T$ is the decoding delay.
Note that every source symbol $\bs[i]$ remains active for a duration of ${T+1}$ symbols before its deadline expires. Therefore the erasure patterns observed in a window of length ${T+1}$ are naturally of interest. For such channels we will suppress the $W$ parameter in the notation for $\cC_\mrm{I}(\cdot)$ and simply use the notation $\cC_\mrm{I}(N,B)$.
The extensions to other values of $W$ will be reported elsewhere.

In this section we propose a new family of streaming codes that are near optimal for all rates for the channel $\cC_\mrm{I}(N,B)$. Before stating our constructions, we use the following upper bound~\cite{erlc-infocom}.
\begin{thm}[Badr et al.\ \cite{erlc-infocom}]
Any achievable rate  for  $\cC_I(N,B),$ satisfies
\begin{align}
\left(\frac{R}{1-R}\right)B + N \le T+1. \label{eq:r-ub}
\end{align}
\label{thm:Chan-1-UB}
and furthermore $N \le B$ and $B \le T$. 
\hfill$\Box$
\end{thm}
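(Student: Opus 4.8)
The plan is to establish the bound \eqref{eq:r-ub} via a periodic-erasure argument combined with a rank/counting inequality, which is the standard information-theoretic technique for delay-constrained erasure codes. First I would fix a feasible $(n,k,m)$ streaming code of rate $R=k/n$ that corrects all erasure patterns of the channel $\cC_\mrm{I}(N,B)$ with delay $T$, and construct a specific admissible channel realization: a periodic pattern with period $T+1$ in which each period contains one burst of length $B$ (say in positions $[0,B-1]$ of the period) followed by enough unerased positions, and separately consider the periodic pattern consisting of $N$ isolated erasures per window. The key observation is that since the code must recover $\bs[0],\ldots,\bs[B-1]$ by their deadlines $T,\ldots,T+B-1$, the parity symbols transmitted in the $T+1-B$ unerased slots of a period, namely $\bx[B],\ldots,\bx[T]$, must jointly carry at least $Bk$ symbols' worth of information about the erased source block, beyond what they already spend describing the $T+1-B$ fresh (unerased) source symbols in that window.

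The second step is to make this counting precise. In one window of length $T+1$ subject to the burst pattern, the $T+1-B$ received channel packets carry $(T+1-B)n$ field symbols; they must determine the $(T+1-B)k$ unerased source symbols plus the $Bk$ erased ones (the latter within the delay window, which forces them to be a function of these received packets together with packets outside the window that, in the periodic construction, are themselves constrained). Iterating over consecutive periods and letting the number of periods grow, the per-period bookkeeping gives $(T+1-B)n \ge (T+1)k$, equivalently $B \le T(1-R)/R$, which is the burst constraint embedded in \eqref{eq:r-ub}. A parallel (and simpler) argument using the $N$-isolated-erasures pattern --- e.g.\ $N$ consecutive erasures followed by $T+1-N$ clean packets, made periodic --- gives $N \le (T+1)(1-R) = (1-R)(T+1)$. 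The real content of Theorem~\ref{thm:Chan-1-UB} is that these two constraints can be \emph{combined} into the single inequality $\frac{R}{1-R}B + N \le T+1$; this requires a cleverer periodic pattern that interleaves a burst of length $B$ with $N$ isolated erasures placed in the \emph{recovery window} of the burst, so that the parity checks used to resolve the burst are themselves partially erased, and one tracks a single entropy/rank ledger across the hybrid pattern rather than analyzing burst and isolated cases separately.

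The auxiliary claims $N \le B$ and $B \le T$ are comparatively easy. For $N \le B$: by definition $\cC_\mrm{I}(N,B,W) = \cC(N,B,0,W)$ and the model explicitly imposes $N \le B+K = B$ when $K=0$, so this is immediate from the channel definition and needs no coding argument (alternatively, it follows because any pattern of $N$ isolated erasures in a window, if $N>B$, would not be correctable once it is realized as — or dominated by — a burst-type constraint, but the definitional route is cleanest). For $B \le T$: an erasure burst of length $B = T+1$ occupying $[0,T]$ erases $\bs[0]$ together with every channel packet up to its deadline $\bx[0],\ldots,\bx[T]$, so $\bs[0]$ can never be recovered with delay $T$ — hence $B \le T$.

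I expect the main obstacle to be the hybrid periodic construction and its ledger: getting the placement of the $N$ isolated erasures relative to the length-$B$ burst exactly right so that their damage to the recovery parities is ``charged'' additively, and then arguing the iterated inequality cleanly in the limit of many periods without the boundary terms (source symbols with negative or large time index, and the memory-$m$ tails) spoiling the count. This is where one must invoke, or re-derive in the streaming setting, the right rank inequality for the truncated generator matrix $\bG_j^s$ from \eqref{eq:GsT} — essentially that recoverability of a source block from a punctured set of columns forces a lower bound on the number of surviving columns — and then sum it up. Since the theorem is quoted from \cite{erlc-infocom}, I would ultimately cite that reference for the detailed ledger and present here only the construction of the worst-case hybrid pattern plus the two easy side bounds.
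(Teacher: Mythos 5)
The paper does not prove Theorem~\ref{thm:Chan-1-UB}: it states the result as a citation to~\cite{erlc-infocom} with a terminating $\Box$ and immediately moves on, so there is no in-paper proof to compare against, and you correctly end your sketch by deferring to that reference. The two side bounds you argue are fine: $N\le B$ is immediate from the channel definition ($N\le B+K$ with $K=0$), and $B\le T$ holds because a burst covering $[0,T]$ erases every channel packet up to $\bs[0]$'s deadline.

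The warm-up periodic counting, however, has concrete errors that would derail the argument even as a sketch. A period-$(T+1)$ pattern with a length-$B$ burst at the start of each period is not an admissible realization of $\cC_\mrm{I}(N,B,T+1)$ once $N<B$: the sliding window $[1,T+1]$ then sees erasures at $1,\dots,B-1$ and at $T+1$, which is neither a single burst nor $\le N$ arbitrary losses. The admissible pure-burst pattern has period $T+B$ (bursts separated by exactly $T$, the Martinian--Sundberg construction), which yields $Tn\ge(T+B)k$ and hence $B\le T\tfrac{1-R}{R}$. The inequality you display, $(T+1-B)n\ge(T+1)k$, actually rearranges to $B\le(1-R)(T+1)$, not to $B\le T\tfrac{1-R}{R}$ as you assert; the two coincide only at $R=T/(T+1)$, and $B\le(1-R)(T+1)$ is what~\eqref{eq:r-ub} gives only at the corner $N=B$. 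Finally, the central content --- the single linear tradeoff $\tfrac{R}{1-R}B+N\le T+1$ --- is not derived, and the hybrid pattern you gesture at (a burst with $N$ isolated erasures placed inside its length-$(T+1)$ recovery window) is, as written, disallowed by $\cC_\mrm{I}$, whose definition is an exclusive-or: a window may contain a burst \emph{or} up to $N$ arbitrary erasures, not both. You are candid that this combined step is the real work and defer it to~\cite{erlc-infocom}, so this is an acknowledged gap rather than a concealed one, but neither the pure-burst count nor the combined bound is actually established in the proposal.
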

Theorem~\ref{thm:Chan-1-UB} shows that when the rate $R$ and delay $T$ are fixed  there exists a tradeoff between the achievable values of $B$ and $N$. We cannot have streaming codes that simultaneously correct long erasure bursts and many isolated erasures. The upper bound~\eqref{eq:r-ub}
 also shows that the $R=1/2$ codes found via a computer search in~\cite[Section V-B]{MartinianS04} are indeed optimal.

We propose a class of codes, {\em Maximum Distance And Span Tradeoff (MiDAS) codes}, that achieve near-optimal tradeoff.

\begin{thm}
For the  channel $\cC_I(N,B)$ and delay $T$ with $N \le B$ and $B \le T,$ there exists a code of rate $R$ that satisfies
\begin{align}
\left(\frac{R}{1-R}\right)B + N \ge T.
\label{b-achiev}
\end{align}
$\hfill\Box$
\label{thm:midas}
\end{thm}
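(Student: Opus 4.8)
The plan is to construct the MiDAS code explicitly by a layered design, mirroring and extending the construction of the MS codes reviewed in Section~\ref{subsec:MS}. As in~\eqref{eq:s-split}, I would split each source symbol $\bs[i]$ into a ``burst-sensitive'' part $\bu[i]$ of size $B$ and a ``distance'' part $\bv[i]$ of size roughly $T-B$, and then assign two layers of parity checks: a first layer of parity checks $\bp_v[\cdot]$ obtained by applying a Strongly-MDS code to the $\bv[\cdot]$ stream (as in~\eqref{eq:pv-conc}), and a second layer that superimposes a time-shifted copy of the $\bu[\cdot]$ symbols onto $\bp_v[\cdot]$ exactly as in~\eqref{eq:MS-q}, together with an \emph{additional} Strongly-MDS parity layer computed on the $\bu[\cdot]$ symbols so that the isolated-erasure patterns with up to $N$ losses can also be recovered. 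The sizes of the sub-blocks $\bu[i]$, $\bv[i]$ and of the two parity groups should be chosen as functions of $R$, $B$, $N$, $T$ so that the overall rate equals $R$ and~\eqref{b-achiev} holds; the natural guess, driven by matching~\eqref{eq:r-ub}, is something like $|\bv[i]| = \frac{R}{1-R}B + N - B$ sub-symbols and parity sizes that total $\frac{1-R}{R}$ times the source size. (One should double-check the regime $N \le B \le T$ where these quantities are non-negative, and handle the boundary cases separately.)

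The verification then splits into the two erasure patterns allowed by $\cC_I(N,B)$. First, for a \emph{burst} of length $\le B$ occupying $[j, j+B-1]$: I would recover the first parity layer $\bp_v[\cdot]$ in the $B$ time slots following the burst by cancelling the already-decoded or unerased $\bu[\cdot]$ symbols from $\bq[\cdot]$ (same cancellation as in the MS decoder), then invoke property P2 of Lemma~\ref{lem:mds} on the Strongly-MDS code applied to the $\bv[\cdot]$ stream to recover $\bv[j],\dots,\bv[j+B-1]$ simultaneously by the deadline; finally recompute $\bp_v[\cdot]$ for the later slots and subtract from $\bq[\cdot]$ to peel off $\bu[j],\dots,\bu[j+B-1]$ with delay exactly $T$. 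This step is essentially the MS argument and goes through provided the Strongly-MDS code has enough redundancy, which is where the $\frac{R}{1-R}B$ term in~\eqref{b-achiev} enters. Second, for up to $N$ \emph{isolated} erasures in the window of length $T+1$: here I would rely on property P1 of Lemma~\ref{lem:mds} applied to the Strongly-MDS codes on \emph{both} the $\bu[\cdot]$ and $\bv[\cdot]$ streams (plus the superposition structure), showing that $N$ arbitrary losses leave enough received parity symbols within the delay window to recover each erased source symbol on time; the count here is what forces the ``$+N$'' term.

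The main obstacle I anticipate is the isolated-erasure analysis, specifically handling the interaction between the superimposed layers: when an isolated erasure hits a $\bq[i] = \bp_v[i] + \bu[i-T]$ symbol, one loses information about \emph{both} a parity check of the $\bv$-code and a (shifted) source symbol $\bu[i-T]$ simultaneously, so the two Strongly-MDS decoders are coupled rather than independent. I would need to argue that, because the shift is exactly $T$, the symbol $\bu[i-T]$ has already been recovered (or will be recovered from its own parity layer within its own deadline) before slot $i$ is needed, so the coupling can be resolved sequentially; making this scheduling precise — i.e. exhibiting an order in which erased symbols are decoded, each within delay $T$, for \emph{every} admissible pattern of $\le N$ isolated erasures — is the delicate combinatorial core of the proof. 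A secondary technicality is choosing the sub-symbol dimensions so that they are simultaneously integers (or arguing that one may take $k$, $n$ large enough, as permitted by the footnote in Section~\ref{sec:model}, to clear denominators) while the rate stays exactly $R$ and~\eqref{b-achiev} is met with the claimed slack of at most one unit relative to the converse~\eqref{eq:r-ub}. I would close by remarking that~\eqref{b-achiev} differs from the upper bound~\eqref{eq:r-ub} by at most one, establishing near-optimality.
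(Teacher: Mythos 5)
Your construction and decoding schedule match the paper's MiDAS proof: a Maximally Short layer (split $\bs[i]$ into $\bu[i]\in{\mathbb F}_q^B$ and $\bv[i]\in{\mathbb F}_q^{T-B}$, form $\bq[i]=\bp_v[i]+\bu[i-T]$) concatenated with a separate Strongly-MDS parity layer $\bp_u[i]\in{\mathbb F}_q^K$ on the $\bu$ stream with $K=\frac{NB}{T+1-N}$, handling bursts via MS decoding and isolated losses by decoding $(\bv,\bp_v)$ and $(\bu,\bp_u)$ as independent Strongly-MDS codes via property P1. The coupling you worry about dissolves exactly in the way you anticipate and is not combinatorially delicate: to recover $\bv[0]$ by time $T-1$ the decoder only touches $\bq[i]$ for $i\in[0,T-1]$, where the interfering $\bu[i-T]$ sit at times $<0$ and are therefore available (unerased or inductively recovered), and $\bu[0]$ itself is decoded from the disjoint $\bp_u$ layer rather than from $\bq$, so a one-line forward induction on the decoding deadline suffices.
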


\begin{remark}
Comparing~\eqref{b-achiev} with the upper bound in~\eqref{eq:r-ub}, the only difference is the additional constant $1$ in the right hand side of the inequalities.
\end{remark}

Our proposed construction is illustrated in Fig.~\ref{fig:midas-const}.  We split the source symbol $\bs[i] \in {\mathbb F}_q^{T}$ into two groups $\bu[i]$ and $\bv[i]$ as in~\eqref{eq:s-split} and generate the parity checks $\bq[i]$ as in~\eqref{eq:MS-q}. The resulting code up to this point is just a MS code that can correct an erasure burst of length $B$. 

We further apply another  Strongly-MDS code to the $\bu[i]$ symbols and generate the set of parity check symbols,\begin{align}
\bp_u[i] = \left(\sum_{j=1}^{T} \bu^\dagger[i-j]\cdot \bH^u_j\right)^\dagger,  \qquad \bp_u[i] \in {\mathbb F}_q^K\label{eq:pu-conc},
\end{align}
where $\bH^u_j$ are ${B \times K}$ matrices associated with a Strongly-MDS code~\eqref{eq:systematic}.
We finally concatenate the parity checks $\bq[i]$ and $\bp_u[i]$ with the source symbols i.e., ${\bx[i] = \left(\bu^\dagger[i], \bv^\dagger[i], \bq^\dagger[i], \bp_u^\dagger[i]\right)^\dagger}$ as shown in Fig.~\ref{fig:midas-const}.

\begin{figure}[t]
\resizebox{\columnwidth}{!}{
\begin{tikzpicture}[node distance=1mm,
  ]
  \tikzstyle{triple2} = [rectangle split, anchor=text,rectangle split parts=4]
  \tikzstyle{double2} = [rectangle split, anchor=text,rectangle split parts=2]
  \tikzstyle{triple} = [draw, rectangle split,rectangle split parts=4]
	\tikzstyle{double} = [draw, rectangle split,rectangle split parts=2]
	\tikzset{block/.style={rectangle,draw}}
	
	
	\node[triple2, minimum width=1.6cm] (start) {$B$ Symbols
    \nodepart{second}
      $T-B$ Symbols
    \nodepart{third}
     \tikz{\node[double2] {$B$ \nodepart{second}Symbols};}
		\nodepart{fourth}
		$K$ Symbols
  };
	
  \node[triple,  right = of start,fill=light-gray,minimum width=1.6cm] (p1) {$\bu[0]$
    \nodepart{second}
      $\bv[0]$
    \nodepart{third}
      \tikz{\node[double2] {$\bu[-T]$ \nodepart{second}$+\bp_v(\bv^{-1})$};}
			\nodepart{fourth}
			$\bp_u(\bu^{-1})$
  };
  
  \node[triple, right = of p1,fill=light-gray,minimum width=1.6cm] (p11) {$\bu[1]$
    \nodepart{second}
      $\bv[1]$
    \nodepart{third}
      \tikz{\node[double2] {$\bu[-T+1]$ \nodepart{second}$+\bp_v(\bv^{0})$};}
			\nodepart{fourth}
			$\bp_u(\bu^{0})$
  };
  
  \node [block, right of=p11,minimum width=1.65cm, minimum height=3.28cm,node distance=1.94cm, fill=light-gray] (p2) {$\cdots$};
   
  \node[triple,  right = of p2,fill=light-gray,minimum width=1.6cm] (p3) {$\bu[B-1]$
    \nodepart{second}
      $\bv[B-1]$
    \nodepart{third}
      \tikz{\node[double2] {$\bu[-T+B-1]$ \nodepart{second}$+\bp_v(\bv^{B-2})$};}
			\nodepart{fourth}
			$\bp_u(\bu^{B-2})$
  };
  
  \node[triple,  right = of p3,minimum width=1.6cm] (p4) {$\bu[B]$
    \nodepart{second}
      $\bv[B]$
    \nodepart{third}
      \tikz{\node[double2] {$\bu[-T+B]$ \nodepart{second}$+\bp_v(\bv^{B-1})$};}
			\nodepart{fourth}
			$\bp_u(\bu^{B-1})$
  };
  
  \node [block, right of=p4,minimum width=1.65cm, minimum height=3.28cm,node distance=1.96cm] (p5) {$\cdots$};
  
  \node[triple,  right = of p5,minimum width=1.6cm] (p6) {$\bu[T-1]$
    \nodepart{second}
      $\bv[T-1]$
    \nodepart{third}
      \tikz{\node[double2] {$\bu[-1]$ \nodepart{second}$+\bp_v(\bv^{T-2})$};}
			\nodepart{fourth}
			$\bp_u(\bu^{T-2})$
  };
  
  \node[triple, right = of p6,minimum width=1.6cm] (p7) {$\bu[T]$
    \nodepart{second}
      $\bv[T]$
    \nodepart{third}
      \tikz{\node[double2] {$\bu[0]+$ \nodepart{second}$\bp_v(\bv^{T-1})$};}
			\nodepart{fourth}
			$\bp_u(\bu^{T-1})$
  };
  
  \bracedn{p1}{p3}{-2mm}{\footnotesize{Erased Packets}};
  \bracedn{p4}{p6}{-2mm}{\footnotesize{Used to recover $\bv[0],\cdots ,\bv[B-1]$}};
  \bracedn{p7}{p7}{-2mm}{\footnotesize{Recover $\bu[0]$}};

\end{tikzpicture}}
\caption{A window of $T+1$ channel packets showing the decoding steps of a MiDAS code when an erasure burst takes place. In the case of isolated erasures, the $\bv$ and $\bu$ symbols are recovered separately using the $\bp_v(\cdot)$ parities in the window $[0,T-1]$ and $\bp_u(\cdot)$ parities in the window $[0,T]$, respectively. $\bv^{t}$ denotes the set of symbols $(\bv[t-T+1],\dots,\bv[t])$.}
\label{fig:midas-const}
\end{figure}

Note that $\bx[i] \in {\mathbb F}_q^{(T+B+K)}$ and the associated rate is given by  $R = \frac{T}{T+B+K}$.
In our setup $K$ is a free parameter which can be used to vary $N$. The resulting value of $N$ is computed next.


Our proposed decoder  recovers $\bv[0]$ from  the parity checks $\bp_v[i]$ in the interval $i \in [0,T-1]$. It separately uses the parity checks $\bp_u[i]$ in $i \in [0,T]$ to recover $\bu[0]$. Recall from~\eqref{eq:MS-q}  that
$\bq[i] = \bp_v[i] + \bu[i-T],$ where $\bp_v[i]$ are the parity checks of the Strongly-MDS code~\eqref{eq:pv-conc}.
Since the interfering $\bu[\cdot]$ symbols in this interval are not erased, they can be canceled out 
by the decoder from $\bq[\cdot]$ and the corresponding parity checks $\bp_v[\cdot]$ are recovered at the decoder. Since the code $(\bv[i], \bp_v[i])$ is a Strongly-MDS code of rate $R_v = \frac{T-B}{T}$, applying Lemma~\ref{lem:mds}  the associated value of $N^v$ is given by $N^v = (1-R_v)T= B$.   Since $N \le B$ holds, the recovery of $\bv[0]$ by time ${t=T-1}$ is guaranteed by the code construction. 

As stated before, for recovering $\bu[0]$ at time $t=T,$ we use the $\bp_u[\cdot]$ parity checks. Note that the associated code
$(\bu[i], \bp_u[i])$ is a Strongly-MDS code with rate $R_u = \frac{B}{B+K}$ and hence it follows from Lemma~\ref{lem:mds} that in order to achieve the desired value of $N$ we must have that
\begin{align}
N = (1-R_u)(T+1) = \frac{K}{K+B}(T+1)
\end{align}
Thus it suffices to select\footnote{If the right hand side in~\eqref{eq:K-opt} is not an integer we will need to expand each of the source symbol by a factor of $(T+1-N)$ i.e., $\bs[i] \in {\mathbb F}^{T(T+1-N)}$ and similarly $\bu[i]\in {\mathbb F}^{B(T+1-N)}$ and $\bv[i]\in {\mathbb F}^{(T-B)(T+1-N)}$ in~\eqref{eq:s-split}  also need to be expanded by the factor $(T+1-N)$.} 
\begin{align}
K = \frac{N}{T+1-N}B,\label{eq:K-opt}
\end{align}
and the rate of the code satisfies
\begin{align}
R &= \frac{T}{T+B+K} = \frac{T}{T+B + B\frac{N}{T+1-N}}\label{eq:K-sub}\\
&\ge \frac{T}{T+B + B\frac{N}{T-N}} \notag\\
&= \frac{T-N}{T-N+B}\label{eq:R-lb-final}
\end{align}
where~\eqref{eq:K-sub} follows by substituting in~\eqref{eq:K-opt}. Rearranging~\eqref{eq:R-lb-final}
we have that
\begin{align}
\frac{R}{1-R}B + N \ge T.\label{eq:midas-lb}
\end{align}
We have thus far shown that if there are no more than $N$ erasures in the interval $[0,T]$ then $\bs[0]$ can be recovered by time $T$. Once $\bs[0]$ is recovered we can cancel its effect from all the available parity checks. Thus the same argument can be used to show that $\bs[1]$ can be recovered at time ${T+1}$ if there are no more than $N$ erasures in the interval $[1,T+1]$.  Recursively continuing this argument we are guaranteed the recovery of each $\bs[i]$ by time ${i+T}$. The proof of Theorem~\ref{thm:midas} is thus completed.

$\hfill\blacksquare$

\subsection{Example}
Table~\ref{tab:MIDAS_N2B3T7} illustrates a MiDAS code of rate $\frac{7}{11}$ designed for $\cC_{\mrm{I}}(2,3)$ with $T=7$.  Note that from, we split each source symbol $\bs_i$ into $\bu_i$ and $\bv_i$ as in~\eqref{eq:s-split}. We have that $\bu_i\in {\mathbb F}_q^3$ and $\bv_i \in {\mathbb F}_q^4$ and from~\eqref{eq:pv-conc} and~\eqref{eq:K-opt} we have that the parity check symbols $\bp_v(\cdot) \in {\mathbb F}_q^{3}$ and $\bp_u(\cdot) \in {\mathbb F}_q$. For convenience, we use the subscript of $\bu$ and $\bv$  denotes the time index, whereas the notation $\bv^t$ denotes a sequence of $T$ consecutive symbols ${\bv_t, \bv_{t-1},\ldots, \bv_{t-T+1}}$. Thus $\bp_v(\bv^t)$ denotes the parity checks ~\eqref{eq:pv-conc} applied to $\bv^t$ and similarly $\bp_u(\bu^t)$ denotes the  parity checks applied to $\bu^t$. Let $\bq_i = \bp_v(\bv^{i-1}) + \bu_{i-T}$. The input at time $i$, $\bx_i$, corresponds the the $i-$th column in  Table~\ref{tab:MIDAS_N2B3T7}.

\begin{table}[!htb]
	\centering
	\begin{align*}
		\begin{array}{c|c|c|c|c|c|c|c|c|c|c}
			 |.|& 0 			 				&  1  					&  2  					&  3  					&  4  					&  5  					&  6  					&  7  					& 8  						&  9 \\ \hline
			 3 	& \bu_0  					&  \bu_1  			&  \bu_2  			&  \bu_3  			&  \bu_4  			&  \bu_5  			&  \bu_6  			&  \bu_7  			& \bu_8  				& \bu_9 \\
			 4 	& \bv_0  					&  \bv_1  			&  \bv_2  			&  \bv_3  			&  \bv_4  			&  \bv_5  			&  \bv_6  			&  \bv_7  			& \bv_8  				& \bv_9 \\
			 3 	& \bp_v(\bv^{-1}) & \bp_v(\bv^0) 	& \bp_v(\bv^1) 	& \bp_v(\bv^2) 	& \bp_v(\bv^3) 	& \bp_v(\bv^4) 	& \bp_v(\bv^5) 	& \bp_v(\bv^6) 	& \bp_v(\bv^7)	& \bp_v(\bv^8) \\
			 		& +\bu_{-7} 			& +\bu_{-6} 		& +\bu_{-5} 		& +\bu_{-4} 		& +\bu_{-3} 		& +\bu_{-2} 		& +\bu_{-1} 		& +\bu_{0} 		& +\bu_{1}  		& +\bu_{2} \\
			 1 	& \bp_u(\bu^{-1}) & \bp_u(\bu^0) 	& \bp_u(\bu^1) 	& \bp_u(\bu^2) 	& \bp_u(\bu^3) 	& \bp_u(\bu^4) 	& \bp_u(\bu^5) 	& \bp_u(\bu^6) 	& \bp_u(\bu^7) 	& \bp_u(\bu^8) 
		\end{array}
		\end{align*}
	\caption{{\scriptsize{A MiDAS Code with $(N,B)=(2,3)$ and rate $\frac{7}{11}$ for a delay of $T=7$. The number of sub-symbols in each group is denoted in the first column.}}}
	\label{tab:MIDAS_N2B3T7}
	\vspace{-1em}
\end{table}
To illustrate the decoding, assume an erasure burst of length $B=3$ happens in the interval $[0,2]$. In Table~\ref{tab:MIDAS_N2B3T7}, $\bu_{-4},\dots,\bu_{-1}$ are not erased and thus can be subtracted to recover the parity checks $\bp_v$ in the interval $[3,6]$. These parities belong to the $(7,4,7)$ Strongly-MDS Code and suffice to recover $\bv_0,\ldots, \bv_2$ simultaneously at time $T-1=6$. In particular note that  $(1-R_v)T = (1-\frac{4}{7})(7) = 3 $ and thus property P2 of Lemma~\ref{lem:mds} can be immediately applied. 
At this point all the erased $\bv[\cdot]$ symbols have been recovered. Next, $\bu_0$ is recovered at time $7$ by subtracting out $\bp_v(\bv^8)$. Likewise $\bu_1$ and $\bu_2$ are sequentially recovered at time $8$ and $9$. 

To compute the achievable $N$, it suffices to compute $N^v$ for the $\left(\bv, \bp_v(\cdot)\right)^\dagger$ code in the interval $[0,6]$ and $N^u$ for the $\left(\bu, \bp_u(\cdot)\right)^\dagger$ code in the interval $[0,7]$. Using that $R_v = 4/7$ and $R_u = 3/4$ it follows from Lemma~\ref{lem:mds} that $N^v=(1-R_v)(T)=3$ and similarly $N^u=(1-R_u)(T+1) = 2$ and hence $N=\min(N^v, N^u)=2$.

\subsection{Numerical Comparison}
Table~\ref{Sco:Codes} summarizes the achievable $N$ and $B$ for different codes\footnote{We note that the floor of the values given in Table~\ref{Sco:Codes} should be considered as the values might not be integers}. The first three rows correspond to Strongly-MDS, MS and MiDAS codes discussed in Section~\ref{subsec:BEC},~\ref{subsec:MS} and~\ref{sec:chan-I}. The last row correspond to a family of codes -- Embedded Random Linear Codes (E-RLC) -- propsed by Badr et al.~\cite{erlc-infocom} for the channel $\cC_\mrm{I}(N,B)$ and delay $T$ whose rate satisfies
\begin{align}
\left(\frac{R}{1-R}\right)(B+N-1) \ge T.
\label{b-slec}
\end{align}

While such constructions are optimal for $R=1/2,$ they are sub-optimal in general.

\begin{table}[!htdp]
\begin{center}
\begin{tabular}{c|c|c}
Code & $N$ & $ B$ \\\hline\hline
Strongly-MDS Codes & $(1-R)(T+1)$ &$(1-R)(T+1)$ \\\hline
Maximally Short Codes & $1$ & $T\cdot\min\left(\frac{1}{R}-1, 1\right)$ \\\hline
 \!\!MiDAS Codes  \!\! & $\min\left(B, T\!- \!\frac{R}{1-R}B \right) $& $B \in [1, T]$  \\\hline
E-RLC Codes~\cite{erlc-infocom} & & \\
 ${\Delta \in [R(T+1), T-1]},$ &$\frac{1-R}{R}(T-\Delta)$ &  $\frac{1-R}{R}\Delta$  \\
 ${R \ge 1/2}$ & & \\\hline
\end{tabular}
\end{center}
\caption{Achievable $(N,B)$ for channel $\cC_\mrm{I}(\cdot)$}
\label{Sco:Codes}
\end{table}%

\begin{figure}
\centering
\includegraphics[width=\linewidth]{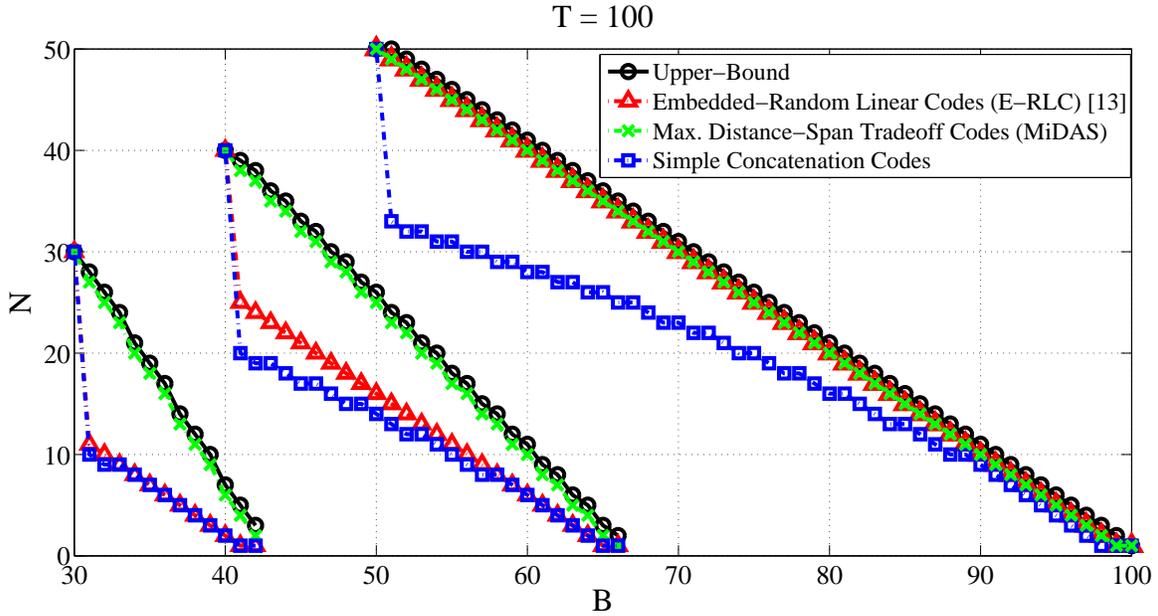}
\caption{Achievable tradeoff between $N$ and $B$ for different rates $R \in \{0.5, 0.6, 0.7\}$ from right to left. The uppermost curve  (solid black lines with `o') in each group is the upper bound in~\eqref{eq:r-ub}. The MiDAS codes 
 are shown with broken green lines with `$\times$' and are very close to the upper bound. The  Simple Concatenation codes are shown with broken blue lines with `$\square$' whereas the E-RLC codes in~\cite{erlc-infocom} are shown with broken red lines with `$\triangle$'. The delay is fixed to $T=100$. }
\label{fig:cTdT_Tradeoff}
\end{figure}

We further numerically illustrate the achievable $(N,B)$ pairs for various codes in Fig.~\ref{fig:cTdT_Tradeoff}. We note that the  Strongly-MDS and MS Codes in sections~\ref{subsec:BEC} and~\ref{subsec:MS} respectively only achieve the extreme points on the tradeoff. The MiDAS codes achieve a tradeoff, very close to the upper bound for all rates. The E-RLC codes achieving the bound in~\eqref{b-slec} are generally suboptimal except for $R=0.5$. Fig.~\ref{fig:cTdT_Tradeoff} also illustrates another class of codes based on a naive concatenation approach of MS and Strongly-MDS codes. These are always suboptimal and hence not discussed.

\section{Partial Recovery Codes for Channel $\rm{II}$}
\label{sec:chan-2}

In this section we study streaming erasure codes for channel $\cC_\mrm{II}(N,B,W)$. Recall that in any sliding window of length $W$, such channels permit erasure patterns consisting of (i) one erasure burst plus one isolated erasure either before or after the burst, or (ii) up to $N$  erasures in arbitrary locations. The burst plus isolated erasure pattern captures the transition between the bad and good states on the GE channel. Such events appear dominant when the delay $T$   
is large, as discussed in section~\ref{sec:GE}.  The MiDAS codes are not effective when such patterns are dominant.

Motivated by the MiDAS code construction we  focus on codes that correct one erasure burst plus one isolated erasure.
 One can extend the construction to achieve any desired $N,$ by suitably concatenating additional parity checks. However we do not discuss this extension in the paper. In the rest of this section we will drop the parameter $N$ and refer to such channels as $\cC_\mrm{II}(B,W)$.

\begin{figure}[htbp]
	\centering
	\resizebox{5cm}{!}{
	\begin{tikzpicture}[node distance=0mm]
		\node                       (x1start) {Link:};
		\node[usym, right = of x1start]  (x100) {};
		\node[usym, right = of x100]     (x101) {};
		\node[usym, right = of x101]     (x102) {};
		\node[usym, right = of x102]     (x103) {};
		\node[usym, right = of x103]     (x104) {};
		\node[esym, right = of x104]     (x105) {};
		\node[esym, right = of x105]     (x106) {};
		\node[esym, right = of x106]     (x107) {};
		\node[esym, right = of x107]     (x108) {};
		\node[usym, right = of x108]     (x109) {};
		\node[usym, right = of x109]     (x110) {};
		\node[usym, right = of x110]     (x111) {};
		\node[usym, right = of x111]     (x112) {};
		\node[esym, right = of x112]     (x113) {};
		\node      [right = of x113]     (x1end) {$\cdots$};

		\dimdn{x100}{x104}{-2mm}{$T$};
		\dimdn{x105}{x108}{-2mm}{$B$};
		\dimdn{x109}{x112}{-2mm}{$T-1$};
		\dimup{x100}{x113}{4mm}{$2T + B$};
	\end{tikzpicture}}
	\caption{An isolated erasure associated with a burst in channel $\cC_\mrm{II}$}
	\label{fig:period}
\end{figure}
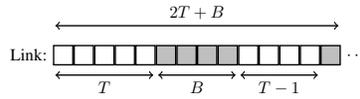

\begin{defn}
\label{def:associated}
An isolated erasure is defined to be \emph{associated} with the erasure burst, if it occurs within the $T$ symbols before or after this burst.
\end{defn}

Throughout this section, we let ${W=2T+B}$ since we are interested in an interval of length $T$ before or after the erasure burst. Note that every erasure burst has at most one associated isolated erasure (c.f. Fig.~\ref{fig:period}). Conversely every isolated erasure can be associated with no more than one erasure burst. 

It turns out that codes that achieve perfect recovery over $\cC_\mrm{II}$ require a significant overhead, particularly when $T$ is close to $B$. Therefore we consider partial recovery codes as discussed next.

\begin{defn}
\label{def:partial}
A \emph{Partial Recovery Code (PRC)}  for $C_\mrm{II}(B,W=2T+B)$ recovers all but one source symbol with a delay of $T$ in each pattern consisting of an erasure burst and its associated isolated erasure.
\end{defn}

\begin{thm}
\label{thm:PRC}
There exists a partial recovery code for $\cC_\mrm{II}(B,2T+B)$ of rate,
\begin{align}
\label{eq:rate-PRC}
R = \max_{B<\Delta<T} \frac{\Delta(T-\Delta) + (B+1)}{\Delta(T-\Delta) + (B+1)(T-\Delta+2)},
\end{align}
that satisfy Definition~\ref{def:partial}.
\end{thm}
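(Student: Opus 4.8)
The plan is to follow the same layered philosophy that produced the MiDAS codes, but now tuned so that the burst layer is robust to \emph{one extra isolated erasure} that is ``associated'' with the burst, at the cost of giving up recovery of a single source symbol. I would again split each source symbol as $\bs[i]=(\bu[i],\bv[i])$ with $\bu[i]\in{\mathbb F}_q^{B+1}$ (note the $B+1$ here, anticipating that the $\bu$-layer must tolerate the burst \emph{and} the straggler isolated erasure) and $\bv[i]$ carrying the remaining coordinates, with a free splitting parameter $\Delta$ playing the role that $B$ played before; concretely $|\bv[i]|$ and the two Strongly-MDS inner codes are chosen so that the $\bv$-layer is an MS-type code correcting a burst of length $\Delta$ over a window of length $\Delta + (T-\Delta)$, and a second Strongly-MDS parity stream $\bp_u[\cdot]$ protects the $\bu[\cdot]$ symbols. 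The channel packet is $\bx[i]=(\bu^\dagger[i],\bv^\dagger[i],\bq^\dagger[i],\bp_u^\dagger[i])^\dagger$ with $\bq[i]=\bp_v[i]+\bu[i-T]$ exactly as in~\eqref{eq:MS-q}, and the rate is $R = |\bs|/(|\bs|+|\bq|+|\bp_u|)$; the expression~\eqref{eq:rate-PRC} should fall out after substituting the sizes dictated by the Strongly-MDS rate constraints (Lemma~\ref{lem:mds}) and optimizing over $\Delta\in(B,T)$.

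The decoding analysis splits into the two erasure patterns allowed by $\cC_\mrm{II}(B,2T+B)$. \textbf{Case 1 (a burst of length $\le B$ plus one associated isolated erasure).} Place the burst in $[0,B-1]$; the isolated erasure sits somewhere in $[-T,-1]\cup[B,B+T-1]$. First recover the $\bv[\cdot]$ symbols: the parities $\bp_v[\cdot]$ in the window $[0,T-1]$ are obtained by cancelling the (mostly unerased) interfering $\bu[\cdot]$ terms from $\bq[\cdot]$ — here I must check that the single extra isolated erasure knocks out at most one $\bp_v[\cdot]$ or one $\bv[\cdot]$, which the Strongly-MDS code of rate $R_v$ can absorb because we sized it for a burst of length $\Delta\ge B$ plus slack. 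Once all erased $\bv[\cdot]$ are back, the parities $\bp_u[\cdot]$ in $[0,T]$ are clean, and a Strongly-MDS decode of the $\bu$-layer recovers all but possibly one $\bu$-coordinate/symbol within delay $T$ — that single unrecovered symbol is the one PRC is allowed to drop. \textbf{Case 2 ($\le N$ arbitrary isolated erasures).} Since we have dropped the parameter $N$, this reduces to verifying that the $\bv$- and $\bu$-layers, decoded separately exactly as in the MiDAS argument, each tolerate the relevant number of isolated losses; this is a direct invocation of Lemma~\ref{lem:mds} P1 with the rates $R_v,R_u$ induced by the chosen $\Delta$. Finally, the usual sliding-window / induction step: once $\bs[0]$ (minus the sacrificed symbol) is recovered, subtract its contribution from all stored parities and repeat for $\bs[1]$, etc., so every $\bs[i]$ is recovered by time $i+T$ up to one symbol per burst event.

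I expect the main obstacle to be pinning down \emph{exactly which} single source symbol gets sacrificed and showing the sacrifice is \emph{consistent} — i.e.\ that after giving it up, the remaining parity equations really do decouple cleanly enough for the induction to go through, and that the straggler isolated erasure (which may fall in the ``recovery window'' $[B,B+T-1]$ used to clean up $\bp_v[\cdot]$) does not cascade into a second lost symbol. Getting the bookkeeping right so that the $\bu$-layer truly needs only $B+1$ (not $B+2$) coordinates of protection, and that this is what yields the particular algebraic form $\Delta(T-\Delta)+(B+1)$ in both numerator and denominator of~\eqref{eq:rate-PRC}, will require carefully tracking how the isolated erasure interacts with the time-shifted superposition in $\bq[i]=\bp_v[i]+\bu[i-T]$. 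The optimization over $\Delta$ itself is routine calculus once the rate formula is established.
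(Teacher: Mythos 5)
Your proposal is essentially a ``MiDAS with a tweaked burst parameter'' construction, and this does not match the paper's PRC construction; the two differ in the shift, in where the second parity layer lives, and in which lemma carries the decoding argument.

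\textbf{The shift.} You keep $\bq[i]=\bp_v[i]+\bu[i-T]$ with the MS-code shift of $T$. The paper instead uses $\bq[i]=\bp_1[i]+\bu[i-\Delta]$ with a freely chosen $\Delta\in\{B+1,\dots,T\}$. The slack $T-\Delta$ is the whole point: after the burst is cleaned up, the decoder still has $T-\Delta+1$ time steps of budget to fish out the straggler $\bv[t]$ from a second parity stream before its deadline. With shift $T$ there is no such slack, and the fact that $\bu[0]$ is buried in $\bq[T]$ means you cannot trade a delayed $\bu$-recovery for a better rate.

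\textbf{Where the second layer sits.} You attach an independent Strongly-MDS parity $\bp_u[\cdot]$ to the $\bu$-stream, MiDAS-style. The paper does \emph{not}: it widens the single Strongly-MDS code on $\bv$ to $(v+u+s,v,T)$ and then \emph{splits} its parities into $\bp_1[\cdot]\in{\mathbb F}_q^u$ (which carries the superposed $\bu[i-\Delta]$) and $\bp_2[\cdot]\in{\mathbb F}_q^s$ (sent in the clear). Both halves are again Strongly-MDS. The $\bu$-layer receives no MDS protection of its own — it is protected only by repetition — and that is exactly why a PRC may lose one symbol: if the isolated erasure lands on the packet carrying the repeated copy of an already-burst-erased $\bu$-symbol, that symbol is unrecoverable. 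Your scheme, by adding $\bp_u[\cdot]$, actually gives \emph{full} recovery, but at a rate strictly below~\eqref{eq:rate-PRC} (e.g.\ $T=7,B=3$: your construction tops out at $7/15$ versus $5/9$ from~\eqref{eq:rate-PRC}), so the algebra ``$\Delta(T-\Delta)+(B+1)$'' cannot fall out of it. The paper's rate emerges from the two rate constraints $R_{12}=\frac{\Delta-B-1}{\Delta}$ (so $\cC_{12}$ corrects $B{+}1$ erasures in a window of length $\Delta$) and $R_2=\frac{T-\Delta+1}{T-\Delta+2}$ (so $\cC_2$ corrects one erasure with the residual delay $T-\Delta+1$), giving $u+v=\Delta(T-\Delta)+(B+1)$ and $2u+v+s=\Delta(T-\Delta)+(B+1)(T-\Delta+2)$.

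\textbf{The missing lemma.} When the burst and the isolated erasure are close, the window $[0,\Delta-1]$ contains two \emph{separate} bursts of the $\bv$-stream. P1/P2 of Lemma~\ref{lem:mds} do not handle this; the paper proves and invokes a two-burst extension (Lemma~\ref{lem:twobursts}), whose key hypothesis $r\le B_1/(1-R)$ is what lets the periodic-erasure argument go through. Your proposal implicitly needs this but never states it, and the gap condition on $t$ (which decides whether the burst and the isolated erasure are decoded jointly or separately) is exactly the $r$-condition of that lemma. Finally, your ``Case 2'' (handling $N$ arbitrary erasures) is moot: the PRC definition and Theorem~\ref{thm:PRC} drop $N$ entirely and concern only the burst-plus-associated-isolated pattern.
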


\subsection{Code Construction}
\label{sec:PRC-construction}
The main steps in our proposed construction of a partial recovery code for $\cC_\mrm{II}(B, 2T+B)$ are as follows. Let $u$, $v$, $s$ and $\Delta$ be  integers that will be specified in the sequel. We let $\bs[i] \in {\mathbb F}_q^{u+v}$.
\begin{enumerate}
\item {\bf Source Splitting}: As in~\eqref{eq:s-split} we split $\bs[i]$ into two groups $\bu[i] \in {\mathbb F}_q^u$ and $\bv[i] \in {\mathbb F}_q^v$.
\item {\bf Construction of $\cC_{12}$}: We apply a rate $R_{12} = \frac{v}{v+u+s}$ Strongly-MDS  $(v+u+s, v, T)$ code $\cC_{12}: (\bv[i], \bp[i])$ to the $\bv[\cdot]$ symbols to generate parity check symbols $\bp[\cdot] \in {\mathbb F}_q^{u+s}$,
\begin{align}
\label{eq:pc-def}
\bp[i] = \left(\sum_{j=1}^T \bv^\dagger[i-j]\cdot \bH_j \right)^\dagger,
\end{align}
where $\bH_1,\ldots, \bH_T \in {\mathbb F}_q^{v \times (u+s)}$  are the matrices associated with the Strongly-MDS code~\eqref{eq:conv-code-sys}.
\item{\bf Parity Check Splitting}: We split each $\bp[i]$ into two groups $\bp_1[i] \in {\mathbb F}_q^u$ and $\bp_2[i] \in {\mathbb F}_q^s$ by assigning the first $u$ sub-symbols in $\bp[i]$ to $\bp_1[i]$ and the remaining $s$ sub-symbols of $\bp[i]$ to $\bp_2[i]$. We can express:
\begin{align}
\bp_k^\dagger[i] = \sum_{j=1}^{T} \bv^\dagger[i-j] \cdot \bH_j^k, \quad k=1,2
\end{align}
where the matrices $\bH_j^k$ are given by $\bH_j = [\bH_j^1~|~\bH_j^2]$. It can be shown that both $\bH_j^1$ 
and $\bH_j^2$ satisfy the Strongly-MDS property~\cite[Theorem 2.4]{strongly-mds} and therefore the codes $\cC_1: (\bv[i], \bp_1[i])$ and $\cC_2: (\bv[i], \bp_2[i])$ are both Strongly-MDS codes.

\item {\bf Repetition Code}: We combine a shifted copy of $\bu[\cdot]$ with the $\bp_1[\cdot]$ parity checks to generate $\bq[i] = \bp_1[i] + \bu[i-\Delta]$. Here $\Delta \in \{B+1,\dots,T\}$ denotes the shift applied to the $\bu[\cdot]$ stream before embedding it onto the $\bp_1[\cdot]$ stream.
\item {\bf Channel Symbol}: We concatenate the generated layers of parity check symbols to the source symbol to construct the channel symbol,
\begin{align}
\label{eq:x-PRC}
\bx[i] = (\bs^\dagger[i],\bq^\dagger[i],\bp_2^\dagger[i])^\dagger.
\end{align}
\end{enumerate}

The rate of the code in~\eqref{eq:x-PRC} is clearly $R = \frac{u+v}{2u+v+s}$. We further select the codes $\cC_{12}$ and $\cC_2$ to have the following rates:
\begin{align}
R_{12} &= \frac{v}{v+u+s} = \frac{\Delta-B-1}{\Delta}, \label{eq:r12-PRC} \\
R_{2} &= \frac{v}{v+s} = \frac{T-\Delta+1}{T-\Delta+2}. \label{eq:r2-PRC}
\end{align}
As established in Appendix~\ref{app:PRC}, these parameters guarantee that the above construction is a PRC code for $\cC_\mrm{II}(B,W=2T+B)$.

We further use the following values of $u$, $v$ and $s$ that satisfy~\eqref{eq:r12-PRC} and~\eqref{eq:r2-PRC},
\begin{align}
\label{eq:PRC_uvs}
u &= (B+1)(T-\Delta+1)-(\Delta-B-1) \nonumber \\
v &= (T-\Delta+1)(\Delta-B-1) \nonumber \\
s &= \Delta-B-1,
\end{align}
and the corresponding rate of the PRC code is
\begin{align}
R = \max_{B<\Delta \le T} \frac{\Delta(T-\Delta) + (B+1)}{\Delta(T-\Delta) + (B+1)(T-\Delta+2)},
\end{align}
which meets the rate given in Theorem~\ref{thm:PRC}.  This completes the details of the code construction. 
The decoding steps are described in detail in Appendix~\ref{app:PRC}.

\begin{remark}
If we assume that the source alphabet is sufficiently large such that the integer constraints can be ignored then the optimal shift is given by
\begin{align}
\label{eq:PRC_Delta}
\Delta^{\star} = \arg \max_{\Delta} R(\Delta) = T+1-\sqrt{T-B},
\end{align}
and the corresponding rate is given by
\begin{align}
R^{\star}  = \frac{(T+2)\sqrt{T-B}  - 2(T-B)}{(T+B+3)\sqrt{T-B}  - 2(T-B)}.
\label{eq:part-rec-opt}
\end{align}
\end{remark}

\subsection{Example}
We illustrate a PRC construction for $\cC_\mrm{II}(B=3, W=17)$ with $T=7$ and rate $R=5/9$.  The values of the code parameters are $u=6$, $v=4$, $s=2$ and $\Delta=6$.

\begin{table}[!htb]
	\centering
	\begin{align*}
		\begin{array}{c|c|c|c|c|c|c|c|c|c|c}
			 |.|& 0 			 				&  1  					&  2  					&  3  					&  4  					& 5 						& 6  						&  7  					& 8 						& 9  						\\ \hline
			 6 	& \bu_0  					&  \bu_1  			&  \bu_2  			&  \bu_3  			&  \bu_4  			& \bu_5  				& \bu_6					&  \bu_7  			& \bu_8  				& \bu_9  				\\
			 4 	& \bv_0  					&  \bv_1  			&  \bv_2  			&  \bv_3  			&  \bv_4  			& \bv_5  				& \bv_6  				&  \bv_7  			& \bv_8  				& \bv_9  				\\
			 6 	& \bp_1(\bv^{-1}) & \bp_1(\bv^0) 	& \bp_1(\bv^1) 	& \bp_1(\bv^2) 	& \bp_1(\bv^3) 	& \bp_1(\bv^4)	& \bp_1(\bv^5)	& \bp_1(\bv^6) 	& \bp_1(\bv^7)	& \bp_1(\bv^8)	\\
	 				& +\bu_{-6} 			& +\bu_{-5} 		& +\bu_{-4} 		& +\bu_{-3} 		& +\bu_{-2} 		& +\bu_{-1} 		& +\bu_{0}  		&+\bu_{1}				& +\bu_{2} 			& +\bu_{3}  		\\
			 2 	& \bp_2(\bv^{-1}) & \bp_2(\bv^0) 	& \bp_2(\bv^1) 	& \bp_2(\bv^2) 	& \bp_2(\bv^3) 	& \bp_2(\bv^4)	& \bp_2(\bv^5)  & \bp_2(\bv^6) 	& \bp_2(\bv^7)	& \bp_2(\bv^8)	  
		\end{array}
		\end{align*}
	\caption{A Partial Recovery Code with $B=3$ achieving a rate of $\frac{5}{9}$ for a delay of $T=7$.}
	\label{tab:Partial_B3T7}
	\vspace{-3em}	
\end{table}

\subsubsection*{Encoding}
\begin{enumerate}
\item Assume that each source symbol $\bs_i \in {\mathbb F}_q^{10}$ consists of ten sub-symbols and split it into $\bu_i$ and $\bv_i$, consisting of six and four sub-symbols respectively. 
\item Apply a $(u+v+s, v)= (12,4)$ Strongly-MDS code $\cC_{12}$ to the $\bv_i$ symbols to generate parity checks $\bp(\bv^{i-1}) \in {\mathbb F}_q^8$. Note that the rate of $\cC_{12}$ equals $R_{12} = \frac{1}{3}$.

\item We split each $\bp(\bv^{i-1})$ into $\bp_1(\bv^{i-1})$ and $\bp_2(\bv^{i-1})$ consisting of $6$ and $2$ sub-symbols as  shown in Table~\ref{tab:Partial_B3T7}. Note that the codes $\cC_1 : (\bv_i, \bp_1(\bv^{i-1}))$ and $\cC_2: (\bv_i, \bp_2(\bv^{i-1}))$ are both Strongly-MDS codes of rates $R_1 = 2/5$ and $R_2 = 2/3$ respectively.

\item We generate  $\bq_i = \bp_1(\bv^{i-1}) + \bu_{i-\Delta}$ and let the transmit symbol be $$\bx_i = \left(\bu_i^\dagger, \bv_i^\dagger, \bq_i^\dagger, \bp_2^\dagger(\bv^i)\right)^\dagger,$$ which corresponds to one column in Table~\ref{tab:Partial_B3T7}.
The overall rate equals $R=5/9$.
\end{enumerate}

\subsubsection*{Decoding}

We start by considering the case when the burst erasure precedes the isolated erasure.  Without loss of generality suppose that the burst erasure spans $t \in [0,2]$ and the isolated erasure occurs at time $t > 2$. 
\begin{itemize}
\item $t\in[3,4]:$ We use code $\cC_{12}$ in the interval $[0, \Delta-1] = [0,5] $ to recover the erased symbols $\bv_0,\bv_1, \bv_2, \bv_t$ by time $\tau=5$. Note that in this  interval any interfering $\bu$ symbols are not erased and can be cancelled out from $\bq_i$. Using $\tau=\Delta-1$ we have
\begin{align}
(1-R_{12})(\tau+1) = \left(1-\frac{1}{3}\right) 6 =4. 
\end{align}
Thus by using an extension of Lemma~\ref{lem:mds} (see Appendix~\ref{app:PRC}, Lemma~\ref{lem:twobursts}) the recovery of $\bv_0, \bv_1, \bv_2, \bv_t$ follows.

Thus by time $\tau=5,$ all the erased $\bv_i$ symbols have been recovered. The symbols $\bu_0,\bu_1,\bu_2$ and $\bu_t$ are each recovered at time $\tau = 6, 7, 8$ and $t +\Delta$ respectively by cancelling $\bp_1(\bv^{\tau-1})$ from the corresponding parity checks in Table~\ref{tab:Partial_B3T7}. Thus all the erased symbols get recovered by the decoder for these erasure patterns.

\item $t \ge 5:$ We use code $\cC_{12}$ in the interval  $[0,4]$ to recover the erased symbols  $\bv_0,\bv_1,\bv_2$ at time $\tau=4$. This follows by applying property P2 in Lemma~\ref{lem:mds} since
\begin{align}
(1-R_{12})(\tau+1) = \left(1-\frac{1}{3}\right) 5  > 3. 
\end{align}

Once the erasure burst has been recovered, the erased symbol $\bv_t$ can be recovered at time $\tau={t+T-\Delta+1} =t+2$ using the parity checks $\bp_2(\bv^{t-1})$ associated with $\cC_2$ in the interval $[t,t+2]$.  This again follows from Lemma~\ref{lem:mds} since
\begin{align}
(1-R_2)(T-\Delta+2) = \left(1-\frac{2}{3}\right)3 = 1
\end{align}
which suffices to recover the missing $\bv_t$.

For the recovery of $\bu_t$, note that by time $t+\Delta$ all the erased $\bv_i$ symbols have been recovered and hence the associated partiy check $\bp_1(\bv^{t+\Delta-1})$  can be cancelled from $\bq_{t+\Delta}$ to recover $\bu_t$.
For the recovery of $\bu_0, \bu_1, \bu_2$ we need to use $\bq_6, \bq_7, \bq_8$ respectively. If $t=5$ then we recover $\bv_5$ at time $\tau = 7$ and then compute $\bp_1(\bv^5),\ldots, \bp_1(\bv^7)$ and in turn recover all the missing $\bu$ symbols. If $t \in [6,8]$, then we will not be able to recover the associated $\bu_{t-6}$ but the remaining two symbols can be recovered. Thus we will have one non-recovered symbol in this case. If $t > 8$ then clearly all the three $\bu$ symbols are recovered and complete recovery is achieved.
\end{itemize}

For the case when the isolated erasure occurs at time $0$ and the burst follows it spanning the interval $[t,t+2]$, the decoder proceeds as follows.
\begin{itemize}
\item $t = 1:$ We use code $\cC_{12}$  in the interval $[0,5]$ to  recover $\bv_0,\dots,\bv_3$ at time $\tau=5$.  The decoder computes $\bp_1(\bv^t)$ in the interval $[6,9]$ and subtracts them to recover $\bu_0,\dots,\bu_3$. All the erased symbols are recovered for this erasure pattern.
\item $t \ge 2:$ We recover $\bv_0$ at time $1$ using the code $\cC_{12}$ before the start of the erasure burst.  In particular taking $\tau=1$ note that
\begin{align}
(1-R_{12})(\tau+1) = \left(1-\frac{1}{3}\right)2  > 1,
\end{align}which by Lemma~\ref{lem:mds} suffices to recover $\bv_0$ at time $\tau=1$

We next show how $\bv_t,\bv_{t+1},\bv_{t+2}$ can be recovered using the code $\cC_{12}$ in the window $[t,t+5]$. 
If $t \in \{2,3\}$ then in addition to $\bv_t, \bv_{t+1}, \bv_{t+2}$ we should also account for the erasure at time $6$
due to the repetition of $\bu_0$. Thus there are a total of $4$ erasures in the window $[t,t+5]$. An extension of Lemma~\ref{lem:mds} (c.f. Appendix~\ref{app:PRC}, Lemma~\ref{lem:twobursts}) is required to show the recovery for this pattern.
For $t \in [4,6]$ the erasure burst overlaps with the repetition of $\bu_0$ and thus there are only three erasures and the recovery follows by the application of Lemma~\ref{lem:mds}. For $t > 6$ the symbol $\bu_0$ is recovered at time $6$
and then the erasure burst  in $[t, t+2]$ can be recovered separately.

\end{itemize}
A complete justification of the above decoding steps is provided in the proof of Theorem~\ref{thm:PRC}. 

\section{Simulations Results}
\label{sec:GE}
We consider a two-state Gilbert-Elliott channel model. 
In the ``good state" each channel packet is lost with a probability of $\eps$ whereas in the ``bad state" each channel packet is lost with a probability of $1$. The average loss rate of the Gilbert-Elliott channel is given by
\begin{align}
\Pr(\cE) = \frac{\beta}{\al+\beta}\eps + \frac{\al}{\al+\beta} \label{eq:loss-uncoded}.
\end{align}
where $\al$ and $\beta$ denote the transition probability from the good state to the bad state and vice versa.

Fig.~\ref{fig:GE_T12_R5_Performance} and Fig.~\ref{fig:GE_T50_R6_Performance} show the simulation performance over a Gilbert-Elliott Channel. The parameters chosen in the two plots are as shown in Table~\ref{tab:CodeProperties_Gilbert-0}.

{\begin{table}[!htb]
	\centering
	\tabcolsep=0.09cm
		\begin{tabular}{l||ccc||ccc}
							& \multicolumn{3}{c||}{Fig.~\ref{fig:GE_T12_R5_Performance}} 	& \multicolumn{3}{c}{Fig.~\ref{fig:GE_T50_R6_Performance}} \\\hline
		$(\alpha,\beta)$ & \multicolumn{3}{c||}{$(5 \times 10^{-4},0.5)$} 			& \multicolumn{3}{c}{$(5 \times 10^{-5},0.2)$} \\
		Channel Length & \multicolumn{3}{c||}{$10^7$} 													& \multicolumn{3}{c}{$10^8$} \\
		Rate $R$ & \multicolumn{3}{c||}{$12/23 \approx 0.52$} 									& \multicolumn{3}{c}{$50/88 \approx 0.6$} \\
		Delay $T$ & \multicolumn{3}{c||}{12} 																		& \multicolumn{3}{c}{50} \\\hline
		$\eps$ 							& $10^{-3}$ & $5 \times 10^{-3}$ 	& $10^{-2}$ & $10^{-3}$ & $5 \times 10^{-3}$ 	& $10^{-2}$ \\\hline
		Burst Only 					& $0.9642$ 	& $0.8796$ 						& $0.7869$ 	& $0.9005$	& $0.5988$						& $0.3563$	\\
		Burst + Isolated		& $0.0268$ 	& $0.1065$ 						& $0.1851$ 	& $0.0923$	& $0.3065$						& $0.3698$	\\
		Burst + 							& $0.0032$ 	& $0.0081$ 						& $0.0222$ 	& $0.0062$	& $0.0937$						& $0.2729$	\\
		Multiple Isolated & & & & & &\\
		Burst Gaps $<T$						& $0.0058$ 	& $0.0058$ 						& $0.0058$ 	& $0.0010$	& $0.0010$						& $0.0010$	\\
		\end{tabular}
		\caption{Parameters of Gilbert-Elliott Channel used in Simulations. We also present the empirical fraction
		of different erasure patterns observed across a sample path.}
		\label{tab:CodeProperties_Gilbert-0}
		\vspace{-2em}
\end{table}

The channel parameters for the $T=12$ case are the same as those used in~\cite[Section 4-B, Fig.~5]{MartinianS04}.
For each of the channels we generate a sample realization and compute the residual loss rate for each of the codes in sections~\ref{sec:chan-I} and~\ref{sec:chan-2}. We also categorize the empirical fraction of different erasure patterns associated with each erasure burst for the two channels for $\eps \in \{10^{-3}, 5 \times 10^{-3}, 10^{-2}\}$ in Table~\ref{tab:CodeProperties_Gilbert-0}. The first row ``Burst-Only" denotes those erasure bursts where there are no isolated erasures in a window of length $T$ before and after the burst. The second row counts those patterns where only one isolated loss occurs in such a window. The third row allows for multiple losses in this window. The fourth row counts those patterns where the inter-burst gap is less than $T$. Note that the contribution of burst plus isolated losses is significant particularly for the second channel when the delay $T=50$.

\begin{figure*}
    \centering
    \subfigure[Simulation results.  All codes are evaluated using a decoding delay of $T=12$ symbols and a rate of $R = 12/23 \approx 0.52$.]
    {
        \includegraphics[width=0.48\linewidth, height=5cm]{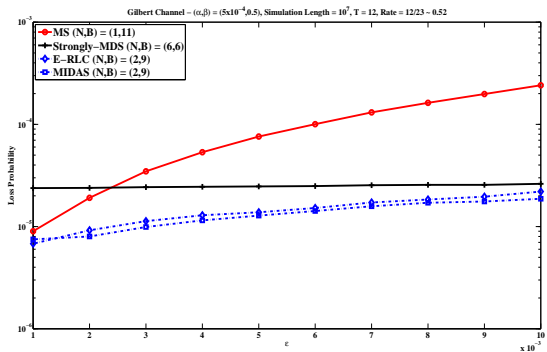}
        \label{fig:GE_T12_R5_Performance}
    }\hspace{0.1em}
    \subfigure[Burst Histogram]
    {
        \includegraphics[width=0.48\linewidth, height=5cm]{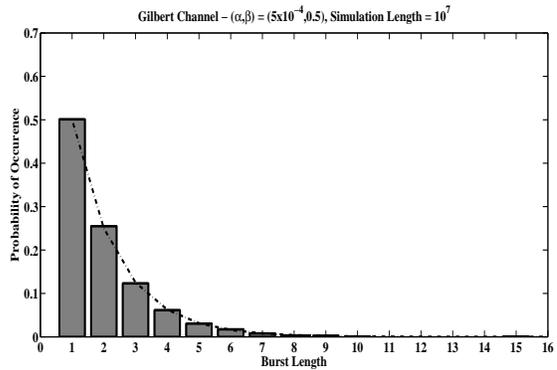}
        \label{fig:GE_T12_R5_Burst}
    }
    \caption{Simulation Experiments for Gilbert-Elliott Channel Model with  $(\al,\beta) = (5 \times 10^{-4},0.5)$.}
    \label{fig:Gilbert_T12}
\end{figure*}

\begin{figure*}
    \centering
    \subfigure[Simulation results. All codes are evaluated using a decoding delay of $T=50$ symbols and a rate of $R = 50/88 \approx 0.6$.]
    {
        \includegraphics[width=0.48\linewidth, height=5cm]{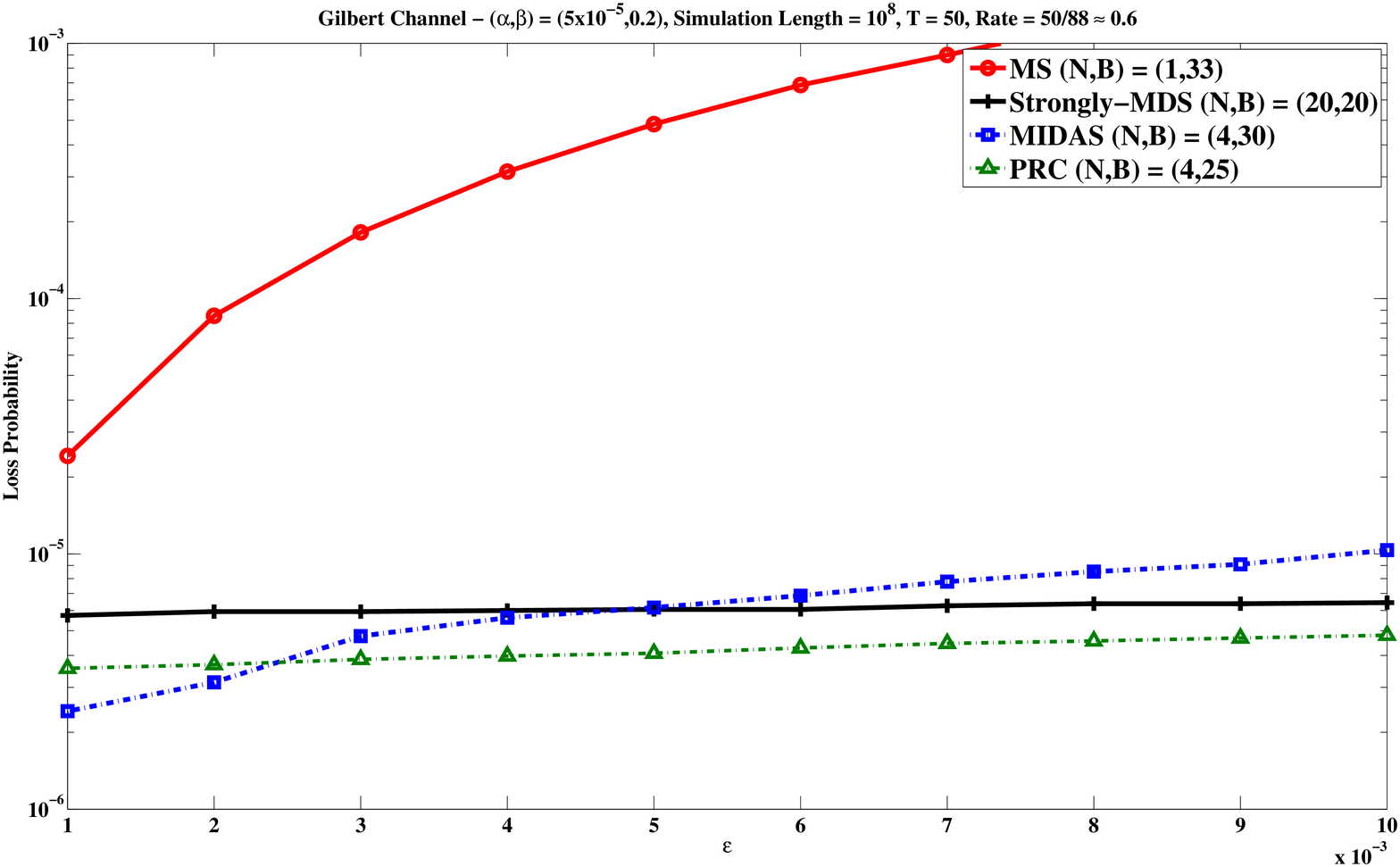}
        \label{fig:GE_T50_R6_Performance}
    }\hspace{0.1em}
    \subfigure[Burst Histogram]
    {
        \includegraphics[width=0.48\linewidth, height=5cm]{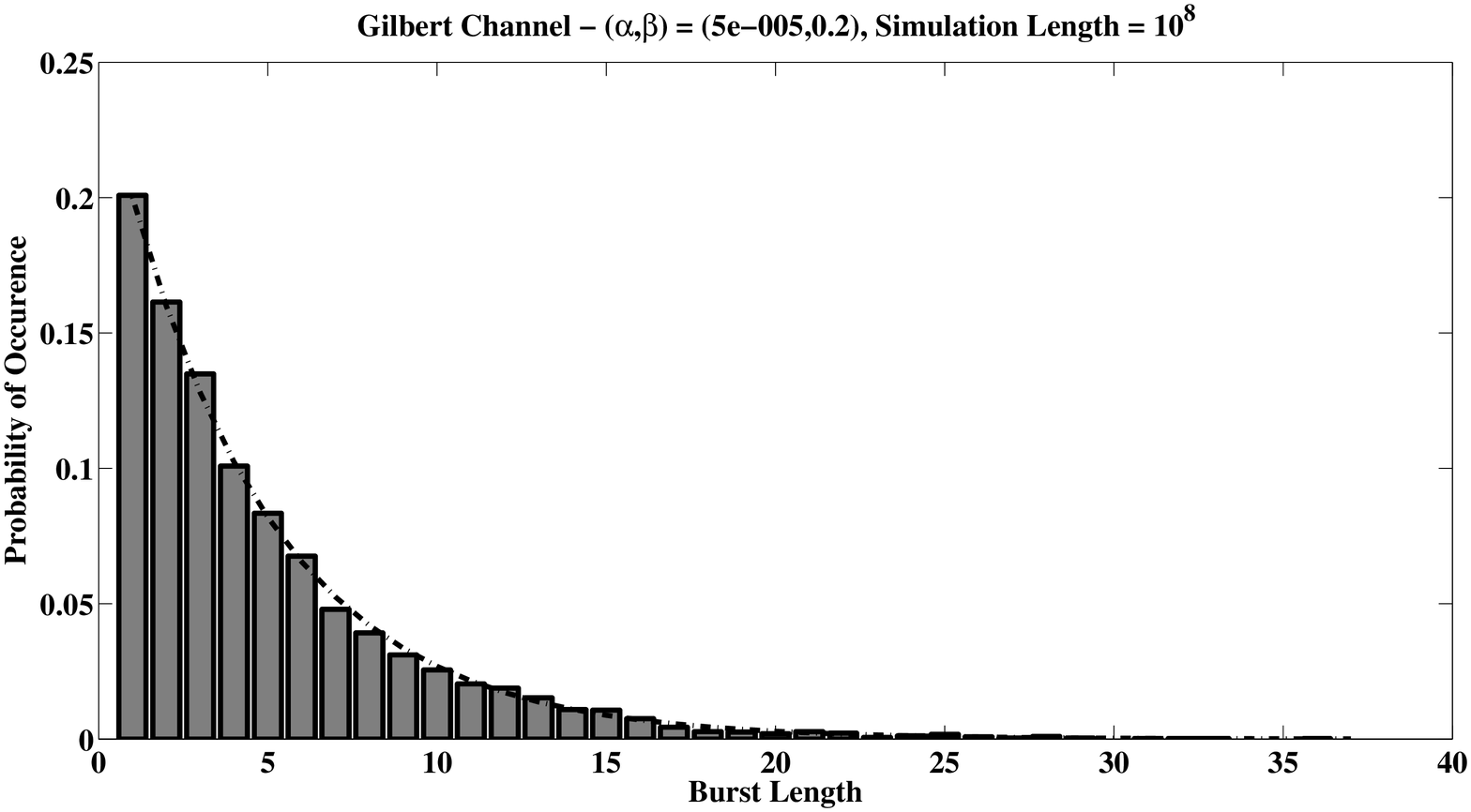}
        \label{fig:GE_T50_R6_Burst}
    }
    \caption{Simulation Experiments for Gilbert-Elliott Channel Model with  $(\al,\beta) = (5 \times 10^{-5},0.2)$.}
    \label{fig:Gilbert_T50}
\end{figure*}

In Fig.~\ref{fig:GE_T12_R5_Performance} and Fig.~\ref{fig:GE_T50_R6_Performance}, we plot the residual loss rate on the y-axis and $\eps$, which corresponds to the loss probability in the good state on the x-axis whereas Fig.~\ref{fig:GE_T12_R5_Burst}  and Fig.~\ref{fig:GE_T50_R6_Burst} illustrate the burst histograms associated with the GE channels. 

All codes in Fig.~\ref{fig:GE_T12_R5_Performance} are selected to have a rate of $R=12/23 \approx 0.52$ and the delay is $T=12$. The black horizontal line is the loss rate of the Strongly-MDS code. It achieves $B=N=6$. Thus its performance is limited by its burst-correction capability. The red-curve which deteriorates rapidly as we increase $\eps$ is the Maximally Short codes (MS). It achieves $B=11$ and $N=1$. Thus in general it cannot recover from even two losses occurring in a window of length $T+1$.  The remaining two plots correspond to the E-RLC codes and MiDAS codes, both of which achieve $B=9$ and $N=2$. The loss probability also deteriorates with $\eps$ for both codes, although at a much lower rate. Thus a slight decrease in $B$, while improving $N$ from $1$ to $2$ exhibits noticeable gains over both MS and Strongly-MDS codes.

In Fig.~\ref{fig:GE_T50_R6_Performance} the rate of all codes except PRC is set to $R=50/88\approx 0.57$ while the rate of the PRC codes is $0.6$. The delay is set to $T=50$. The Strongly-MDS codes achieve ${B=N=20}$ whereas the MS codes achieve $N=1$ and $B=33$. Both codes suffer from the same phenomenon discussed in the previous case. We also consider the MiDAS code with $N=4$ and $B=30$. We observe that its performance deteriorates as $\eps$ is increased, mainly because it fails on burst plus isolated loss patterns, which are dominant in this regime (see Table~\ref{tab:CodeProperties_Gilbert-0}). The PRC code which achieves $N=4$ and $B=25$ and can also handle burst plus one isolated loss exhibits the best performance in this plot. 

Interestingly Table~\ref{tab:CodeProperties_Gilbert-0} also indicates that the probability of having burst plus multiple isolated losses associated with the burst is also significant for the second GE channel with $T=50$. The PRC code will not handle more than one isolated loss associated with the erasure burst. Nevertheless we did not notice a performance degradation due to this in the simulations. To explain this, note that for the GE channel of interest the average burst length is only $\frac{1}{\beta} = 5$
and as shown in Fig.~\ref{fig:GE_T50_R6_Burst} the burst length distribution is an exponential function. Our proposed PRC code uses a shift of $\Delta = 46$ (c.f. section~\eqref{sec:PRC-construction}). It turns out that the isolated erasures which result in decoding failures must be concentrated in the interval $[\Delta-1, \Delta+B-1]$ following the burst rather than the entire interval of length $T$. As $B$ is generally small, it is unlikely that more than one isolated loss is seen in this interval, and hence the effect of multiple isolated losses does not appear to be significant.  Similarly for the first GE channel, since the average burst length is only $2,$  the use of PRC codes to recover from burst plus isolated losses was not found to be effective.

\section{Conclusion}
\label{sec:concl}
We consider new low delay streaming code constructions based on a layered design principle. Our proposed codes are designed for a class of packet-erasure channels that are restricted to introduce a certain set of erasure patterns. These patterns correspond to the dominant set of error events associated with the Gilbert-Elliott channel.  The first class of channels introduce either an erasure burst or a certain number of isolated erasures in any sliding window of interest. We show that there exists a fundamental tradeoff between the burst-error correction and isolated-error correction capability of any streaming code for such channels and propose a class of codes --- MiDAS Codes --- that achieve a near optimal tradeoff. In these constructions, we first construct an optimal code for the burst-erasure channel and then concatenate an additional layer of parity checks for the isolated erasures. We also consider a more complex channel model that involves both a burst erasure and an isolated erasure,  and propose streaming codes that recover all but one source symbol for each such erasure pattern . 
Numerical simulations over the Gilbert Elliott channel indicate that the proposed codes indeed provide significant performance gains over baseline codes. 

\appendices

\section{Proof of Lemma~\ref{lem:mds}}
\label{app:MDS}

We begin by introducing the $j-$th column distance $d_j^c$ of an $(n,k,m)$ convolutional code. Consider the window of the first $j+1$ symbols and let the truncated codeword associated with the input sequence $(\bs[0], \ldots, \bs[j])$ be $(\bx[0],\dots,\bx[j])$. Then the $j$-th column distance is defined as
\begin{align}
\label{eq:djc}
d_j^c = \min_{\substack{\bs \equiv (\bs[0], \ldots, \bs[j])\\ \bs[0] \neq 0}} \mathrm{wt}^c(\bx[0],\dots,\bx[j]),
\end{align}
where recall that each channel symbol $\bx[i]$ has $n$ sub-symbols, i.e., $\bx[i]=(x_0[i],\dots,x_{n-1}[i])$ and $\mathrm{wt}^c(\bv)$ counts the number of non-zero \textbf{sub-symbols} in the codeword $\bv$.

It is well-known that for any $(n,k,m)$ convolutional code  $d_j^c \leq (n-k)(j+1)+1$ for all $j \ge 0$. A special class of  convolutional codes -- systematic Strongly-MDS codes --  satisfy this bound with equality for $j=\{0,\dots,m\}$~\cite[Corollary 2.5]{strongly-mds}. In our setup, since the entire channel packet is either transmitted perfectly or erased, it is more convenient to express the column-distance by counting the number of non-zero symbols. We define the column-distance $d_j$ as the minimum number of non-zero \textbf{symbols} in the ${j+}1$ truncated codeword $(\bx[0],\dots,\bx[j])$, i.e.,
\begin{align}
\label{eq:dj}
d_j = \min_{\substack{\bs \equiv (\bs[0], \ldots, \bs[j])\\ \bs[0] \neq 0}} \mathrm{wt}(\bx[0],\dots,\bx[j]),
\end{align}
where $\mathrm{wt}(\bv)$ counts the number of non-zero \textbf{symbols} in the codeword $\bv$.

Clearly for any $(n,k,m)$ code, we must have that\footnote{We use $\lceil{v}\rceil$ and $\lfloor{v}\rfloor$ to denote the ceil and floor of a real number $v$.}  $d_j \ge \left\lceil\frac{d_j^c}{n}\right\rceil$.   Thus for a Strongly-MDS code we have that
\begin{align}
d_j &\ge \left \lceil {\frac{(n-k)(j+1)+1}{n}} \right \rceil \nonumber \\
		&= \left \lceil {(1-R)(j+1) + \frac{1}{n}} \right \rceil \nonumber \\
		&\geq \left \lfloor {(1-R)(j+1)} \right \rfloor + 1 \label{eq:dj-bnd}
\end{align}
where the last relation is justified as follows. 
If $v = (1-R)(j+1)$ is an integer, $\left\lceil {v+\frac{1}{n}}\right\rceil = v + 1 = \lfloor{v}\rfloor +1$, and thus the last step follows with equality. 
Otherwise, we have  $\left\lceil {v+\frac{1}{n}}\right\rceil \ge \left\lceil {v}\right\rceil = \lfloor{v}\rfloor +1$ and the last step still holds.

To establish P1,  consider the interval $[0,j]$ and consider two input sequences $(\bs[0],\ldots, \bs[j])$ and $(\bs'[0],\ldots, \bs'[j])$ with $\bs[0] \neq \bs'[0]$. Let the corresponding output be $(\bx[0],\ldots, \bx[j])$ and $(\bx'[0],\ldots, \bx'[j])$. Note that the output sequences differ in at-least $d_j$ symbols since otherwise the output sequence $(\bx[0] - \bx'[0], \ldots, \bx[j]-\bx'[j])$ which corresponds to $(\bs[0]-\bs'[0], \ldots, \bs[j]-\bs'[j])$ has a Hamming weight less than $d_j$ while the input $\bs[0] -\bs'[0] \neq 0$, which is a contradiction. 
Thus if $(\bs[0],\ldots, \bs[j])$ is transmitted, for any sequence of $d_j-1$ erased symbols, there will be at-least one symbol where $(\bx'[0], \ldots, \bx'[j])$ differs from the received sequence. Thus from~\eqref{eq:dj-bnd}, $\bs[0]$ is recovered uniquely at time $j$ provided that:
\begin{align}
N \le \left \lfloor {(1-R)(j+1)} \right \rfloor 
\end{align}
from which P1 in Lemma~\ref{lem:mds} immediately follows.


\begin{figure}[htbp]
	\centering
	\resizebox{0.5\columnwidth}{!}{
	\begin{tikzpicture}[node distance=0mm]
		\node                       (x1start) {Link:};
		\node[esym, right = of x1start] (x100) {};
		\node[esym, right = of x100]     (x101) {};
		\node[esym, right = of x101]     (x102) {};
		\node[esym, right = of x102]     (x103) {};
		\node[esym, right = of x103]     (x104) {};
		\node[usym, right = of x104]     (x105) {};
		\node[usym, right = of x105]     (x106) {};
		\node[usym, right = of x106]     (x107) {};
		\node[usym, right = of x107]     (x108) {};
		\node[usym, right = of x108]     (x109) {};
		\node[usym, right = of x109]     (x110) {};
		\node[usym, right = of x110]     (x111) {};
		\node[esym, right = of x111]     (x112) {};
		\node[esym, right = of x112]     (x113) {};
		\node[esym, right = of x113]     (x114) {};
		\node[esym, right = of x114]     (x115) {};
		\node[esym, right = of x115]     (x116) {};
		\node[usym, right = of x116]     (x117) {};
		\node[usym, right = of x117]     (x118) {};
		\node[usym, right = of x118]     (x119) {};
		\node[usym, right = of x119]     (x120) {};
		\node[usym, right = of x120]     (x121) {};
		\node[usym, right = of x121]     (x122) {};
		\node[usym, right = of x122]     (x123) {};
		\node[esym, right = of x123]     (x124) {};
		\node      [right = of x124]   (x1end) {$\cdots$};

		\braceup{x100}{x111}{1mm}{$W_0$};
		\braceup{x104}{x115}{3mm}{$W_{B-1}$};
		
		\dimdn{x100}{x104}{-2mm}{$B$};
		\dimdn{x100}{x111}{-7mm}{Period = $j+1$};
	\end{tikzpicture}}
	\caption{The periodic erasure channel used in proving P2 in Lemma~\ref{lem:mds}, and indicating the first and last windows of interest, $W_0$ and $W_{B-1}$, respectively. Grey and white squares resemble erased and unerased symbols respectively.}
	\label{fig:mdp-oneburst}
\end{figure}
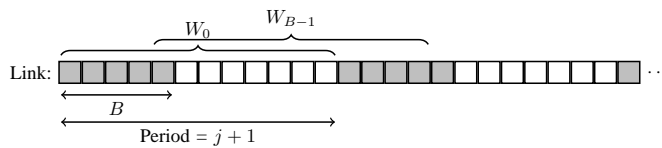

In order to establish property P2, we consider a periodic erasure channel of period length ${j+1}$ as shown in Fig.~\ref{fig:mdp-oneburst}. 
In each period, a burst of length $B$ symbols is introduced followed by $j-B+1$ un-erased symbols. 
First consider the interval $W_0 = [0,j]$. Using the previous part and since $B \le (1-R)(j+1)$ we are guaranteed that $\bs[0]$ is recovered at time $t=j$.
At this point we have recovered $\bs[0]$ and treat it as a non-erased symbol. We next consider the interval $W_1 = [1,j+1]$. There is a burst of length $B-1$
spanning the period $[1,B]$ and an additional erasure at time $\bx[j+1]$. By property P1 we can recover $\bs[1]$ by time $t=j+1$. However since $\bx[j+1]$
is erased, this is equivalent to recovering $\bs[1]$ at time $t=j$.  In a similar fashion by considering the intervals
$W_i \triangleq [i,i+j]$ for $i = \{2,\dots,B-1\}$ we recover $\bs[2],\bs[3],\dots,\bs[B-1]$ at time $j$. At this point all the erased symbols in the first burst have been recovered and the claim follows.


\section{Proof of Decoding in Theorem~\ref{thm:PRC}}
\label{app:PRC}

We first begin with the following extension of Lemma~\ref{lem:mds} which considers two bursts in the interval $[0,j]$.

\begin{lemma}
\label{lem:twobursts}
In the interval $[0,j]$ suppose that there are two erasure bursts spanning the intervals $[0,B_1-1]$ and $[r,r+B_2-1]$. 
 A $(n,k,m)$ Strongly-MDS convolutional code with rate $R = \frac{k}{n},$ can recover all the erased ${B_1+B_2}$ source symbols $\bs[0],\dots,\bs[B_1-1],\bs[r],\dots,\bs[r+B_2-1]$ by time $t=j$ provided that
\begin{align}
r \le \frac{B_1}{1-R} \quad \text{and} \quad B_1+B_2 \leq (1-R)(j+1) \label{eq:twobursts-cond}
\end{align}
 for any $j=0,1,\ldots, m$.
\end{lemma}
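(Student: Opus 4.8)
The plan is to peel the two bursts off one at a time, invoking Lemma~\ref{lem:mds} once for each, so that the only real content is checking that the two stated inequalities are exactly what those two invocations require. First I would recover the leading burst $\bs[0],\ldots,\bs[B_1-1]$ by applying property P1 repeatedly, and then, treating those symbols as known, recover the trailing burst $\bs[r],\ldots,\bs[r+B_2-1]$ in a single step using property P2.

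For the first phase, I would fix $i\in\{0,\ldots,B_1-1\}$ and argue by induction on $i$, assuming $\bs[0],\ldots,\bs[i-1]$ have already been recovered by time $j$ (the case $i=0$ being vacuous). By time-invariance of the convolutional code, once the contributions of the already-recovered symbols are cancelled from the channel outputs, the window $[i,j]$ carries a truncated codeword of the same Strongly-MDS code with $\bs[i]$ in the role of the leading symbol, and the erasures it still contains are the $B_1-i$ remaining erasures of the first burst together with the $B_2$ erasures of the second burst. Since $R\ge 0$ and $i\ge 0$,
\[
(B_1-i)+B_2 \;\le\; (B_1+B_2)-(1-R)i \;\le\; (1-R)(j+1)-(1-R)i \;=\; (1-R)(j-i+1),
\]
so P1 of Lemma~\ref{lem:mds}, applied to this length-$(j-i+1)$ window (legitimate since $j-i\le j\le m$), recovers $\bs[i]$ by time $j$. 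Iterating over $i$ recovers all of $\bs[0],\ldots,\bs[B_1-1]$ by time $j$; crucially, this phase never uses any information about the second burst, whose symbols are simply treated as erasures throughout. (Alternatively, this step can be phrased via the periodic erasure channel used in the proof of P2.)

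For the second phase, the decoder now knows $\bs[0],\ldots,\bs[B_1-1]$ together with every unerased symbol, and cancels all of them from the channel outputs in $[r,j]$. What is left is a truncated codeword of the shifted Strongly-MDS code on the length-$(j-r+1)$ window $[r,j]$ whose only erasures form the single burst $[r,r+B_2-1]$ located at the start of that window. Property P2 of Lemma~\ref{lem:mds} then recovers $\bs[r],\ldots,\bs[r+B_2-1]$ simultaneously at time $j$, provided $B_2\le(1-R)(j-r+1)$; this is exactly where the hypothesis $r\le B_1/(1-R)$ enters, since it gives $(1-R)r\le B_1$ and hence
\[
B_2 \;\le\; (1-R)(j+1)-B_1 \;\le\; (1-R)(j+1)-(1-R)r \;=\; (1-R)(j-r+1).
\]

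I expect the main obstacle to be bookkeeping rather than any nontrivial estimate: one has to make sure the first-phase recovery genuinely does not depend on the second burst, and that Lemma~\ref{lem:mds}, stated for windows anchored at time $0$, transfers to the windows $[i,j]$ and $[r,j]$ after the earlier recovered-or-received symbols have been cancelled --- this is where time-invariance of the convolutional code and the bound $j\le m$ are invoked. The inequality manipulations above are then routine.
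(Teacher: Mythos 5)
Your proof is correct, and the overall architecture mirrors the paper's: first peel off $\bs[0],\dots,\bs[B_1-1]$ symbol by symbol using P1, then treat those symbols as known and recover $\bs[r],\dots,\bs[r+B_2-1]$ in a single application of P2, with the hypothesis $r\le B_1/(1-R)$ entering exactly through the chain $B_2\le(1-R)(j+1)-B_1\le(1-R)(j+1)-(1-R)r=(1-R)(j-r+1)$. Your second phase is essentially identical to the paper's. Where you differ is in how the first phase is justified: the paper runs the same periodic-erasure-channel device as in its proof of P2 --- fixed-length windows $W_i=[i,i+j]$ sliding past $j$, with the erasures of the next period guaranteeing that the recovery never uses symbols beyond time $j$ --- whereas you keep all windows inside $[0,j]$ and instead show that the erasure count $(B_1-i)+B_2$ in the \emph{shrinking} window $[i,j]$ stays below $(1-R)(j-i+1)$, which you prove via the one-line bound $B_1+B_2-i\le(1-R)(j+1)-(1-R)i$ using $R\ge0$. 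Your variant is a bit cleaner: it avoids introducing an auxiliary periodic channel and the attendant ``those later symbols are erased so the delay is still $j$'' bookkeeping, at the cost of one extra elementary inequality. Both versions ultimately rest on P1 applied to a time-shifted codeword after cancellation, which, as you note, requires the time-invariance of the convolutional code and the bound $j-i\le m$ (both fine since $j\le m$). One small remark for completeness: like the paper, your erasure count implicitly assumes the two bursts are disjoint (i.e.\ $r\ge B_1$), which holds in all the paper's applications; it would be worth stating that assumption explicitly.
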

\begin{figure}[htbp]
	\centering
	\resizebox{0.5\columnwidth}{!}{
	\begin{tikzpicture}[node distance=0mm]
		\node                       (x1start) {Link:};
		\node[esym, right = of x1start] (x100) {};
		\node[esym, right = of x100]     (x101) {};
		\node[esym, right = of x101]     (x102) {};
		\node[esym, right = of x102]     (x103) {};
		\node[usym, right = of x103]     (x104) {};
		\node[usym, right = of x104]     (x105) {};
		\node[esym, right = of x105]     (x106) {};
		\node[esym, right = of x106]     (x107) {};
		\node[esym, right = of x107]     (x108) {};
		\node[usym, right = of x108]     (x109) {};
		\node[usym, right = of x109]     (x110) {};
		\node[usym, right = of x110]     (x111) {};
		\node[esym, right = of x111]     (x112) {};
		\node[esym, right = of x112]     (x113) {};
		\node[esym, right = of x113]     (x114) {};
		\node[esym, right = of x114]     (x115) {};
		\node[usym, right = of x115]     (x116) {};
		\node[usym, right = of x116]     (x117) {};
		\node[esym, right = of x117]     (x118) {};
		\node[esym, right = of x118]     (x119) {};
		\node[esym, right = of x119]     (x120) {};
		\node[usym, right = of x120]     (x121) {};
		\node[usym, right = of x121]     (x122) {};
		\node[usym, right = of x122]     (x123) {};
		\node[esym, right = of x123]     (x124) {};
		\node      [right = of x124]   (x1end) {$\cdots$};

		\braceup{x100}{x111}{1mm}{$W_0$};
		\braceup{x103}{x114}{3mm}{$W_{B_1-1}$};
		
		\dimdn{x100}{x103}{-2mm}{$B_1$};
		\dimdn{x106}{x108}{-2mm}{$B_2$};
		\dimdn{x100}{x111}{-7mm}{Period = $j+1$};
	\end{tikzpicture}}
	\caption{The periodic erasure channel used in proving Lemma~\ref{lem:twobursts}, and indicating the first and last windows of interest, $W_0$ and $W_{B-1}$, respectively. Grey and white squares resemble erased and unerased symbols respectively.}
	\label{fig:mdp-twobursts}
\end{figure}
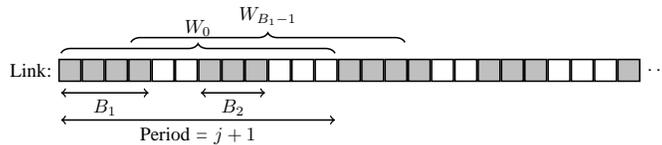
\begin{proof}
Consider a periodic erasure channel of period length $j+1$ as shown in Fig.~\ref{fig:mdp-twobursts}. 
In each period, the channel introduces an erasure burst of length $B_1$ symbols followed by another burst of length $B_2$ symbols starting $r$ symbols from the start of the first burst and the rest of the period is not erased. 
In the first period starting at time $t=0$, the two bursts of length $B_1$ and $B_2$ span the intervals $[0,B_1-1]$ and $[r,r+B_2-1]$ respectively, where $r < \frac{B_1}{1-R}$. 

We consider the intervals $W_i = [i,i+j-1]$ for $i = \{0,1,\dots,B_1-1\}$. In each such interval, the total number of erasures equals $B_1+B_2 \leq (1-R)(j+1)$. Thus as in the proof of property P2 in Lemma~\ref{lem:mds}, we can  recover $\bs[0],\bs[1],\dots,\bs[B_1-1]$ by time $j$ since $\bx[j+1],\bx[j+2],\dots,\bx[j+B_1-1]$ are erased. 

For the second burst we consider the interval $[r,j]$ of length $j-r+1$. At this point all the source symbols in the first burst have been recovered and only a total of $B_2$ erasures remain. The property P2 in Lemma~\ref{lem:mds} can be used to for recovering $\bs[r],\dots,\bs[r+B_2-1]$ by time $j$ since,
\begin{align}
B_2 &\leq (1-R)(j+1) - B_1 \nonumber \\
		&\le (1-R)(j+1) - (1-R)r \nonumber \\
		&= (1-R)(j-r+1) < d_{j-r}.
\end{align}
where the second step uses~\eqref{eq:twobursts-cond} and the last step follows using~\eqref{eq:dj-bnd}. At this point all the erased symbols in the first period have been recovered by time $j$ and the claim follows.

\end{proof}

\usetikzlibrary{patterns}
\begin{figure}[t]
\subfigure[Burst followed by Isolated Erasure: Burst and Isolated Erasures Recovered Simultaneously]
{\resizebox{\columnwidth}{!}{
\begin{tikzpicture}[node distance=1mm,]
  \tikzstyle{triple2} = [rectangle split, anchor=text,rectangle split parts=4]
  \tikzstyle{double2} = [rectangle split, anchor=text,rectangle split parts=2]
	\tikzstyle{single2} = [rectangle split, anchor=text,rectangle split parts=1]
  \tikzstyle{triple} = [draw, rectangle split,rectangle split parts=4]
	\tikzstyle{double} = [draw, rectangle split,rectangle split parts=2]
	\tikzstyle{single} = [draw, rectangle split,rectangle split parts=1]
	\tikzset{block/.style={rectangle,draw}}
	\tikzset{block2/.style={rectangle}}
	
	\input{partial-recoveryA1}
	\label{fig:Partial_CaseA1}

\end{tikzpicture}}
}
\vspace{1em}

\subfigure[Burst followed by Isolated Erasure: Burst and Isolated Erasures Recovered Separately. $\tau = \left\lceil {B\frac{ \Delta}{B+1}}\right\rceil -1$]
{\resizebox{\columnwidth}{!}{
\begin{tikzpicture}[node distance=1mm,]
  \tikzstyle{triple2} = [rectangle split, anchor=text,rectangle split parts=4]
  \tikzstyle{double2} = [rectangle split, anchor=text,rectangle split parts=2]
	\tikzstyle{single2} = [rectangle split, anchor=text,rectangle split parts=1]
  \tikzstyle{triple} = [draw, rectangle split,rectangle split parts=4]
	\tikzstyle{double} = [draw, rectangle split,rectangle split parts=2]
	\tikzstyle{single} = [draw, rectangle split,rectangle split parts=1]
	\tikzset{block/.style={rectangle,draw}}
	\tikzset{block2/.style={rectangle}}
	
	\input{partial-recoveryA2}
	\label{fig:Partial_CaseA3}

\end{tikzpicture}}
}

\vspace{1em}

\subfigure[Isolated Erasure followed by Burst: Isolated and Burst Erasures Recovered Simultaneously]
{\resizebox{\columnwidth}{!}{
\begin{tikzpicture}[node distance=1mm,]
  \tikzstyle{triple2} = [rectangle split, anchor=text,rectangle split parts=4]
  \tikzstyle{double2} = [rectangle split, anchor=text,rectangle split parts=2]
	\tikzstyle{single2} = [rectangle split, anchor=text,rectangle split parts=1]
  \tikzstyle{triple} = [draw, rectangle split,rectangle split parts=4]
	\tikzstyle{double} = [draw, rectangle split,rectangle split parts=2]
	\tikzstyle{single} = [draw, rectangle split,rectangle split parts=1]
	\tikzset{block/.style={rectangle,draw}}
	\tikzset{block2/.style={rectangle}}
	
	\input{partial-recoveryB1}
	\label{fig:Partial_CaseB1}

\end{tikzpicture}}
}

\vspace{1em}

\subfigure[Isolated Erasure followed by Burst: Isolated and Burst Erasures Recovered Separately. $\tau = \left\lceil {\frac{ \Delta}{B+1}}\right\rceil -1$]
{\resizebox{\columnwidth}{!}{
\begin{tikzpicture}[node distance=1mm,]
  \tikzstyle{triple2} = [rectangle split, anchor=text,rectangle split parts=4]
  \tikzstyle{double2} = [rectangle split, anchor=text,rectangle split parts=2]
	\tikzstyle{single2} = [rectangle split, anchor=text,rectangle split parts=1]
  \tikzstyle{triple} = [draw, rectangle split,rectangle split parts=4]
	\tikzstyle{double} = [draw, rectangle split,rectangle split parts=2]
	\tikzstyle{single} = [draw, rectangle split,rectangle split parts=1]
	\tikzset{block/.style={rectangle,draw}}
	\tikzset{block2/.style={rectangle}}
	
	\input{partial-recoveryB2}
	\label{fig:Partial_CaseB3}

\end{tikzpicture}}
}
\caption{Various  erasure patterns of the Channel $\rm{II}$ codes in the decoding analysis. The erasures are  shaded grey boxes, the parity check symbols used to recover the $\bv[\cdot]$ symbols are marked using diagonal stripes whereas the parity check symbols involved in the recovery of the $\bu[\cdot]$ symbols are marked using horizontal stripes.}
\label{fig:Partial_Cases}
\end{figure}

We now proceed to prove Theorem~\ref{thm:PRC}. For any given erasure pattern, in which the channel introduces a burst of length $B$ and one isolated erasure in a sliding window of length $2T+B$, the decoder can recover all symbols with a delay of $T$ but one. We divide these patterns into two main categories. In the first case the erasure burst is followed by an isolated erasure whereas in the second case an isolated erasure preceeds the erasure burst.  


\subsection{Erasure Burst followed by an Isolated Erasure}
Without loss of generality assume that the channel introduces an erasure burst in the interval $[0,B-1]$ and that the isolated erasure occurs at time $t \geq B$. Since the associated isolated erasure follows the erasure burst from Def.~\ref{def:associated} it must occur in the interval $[B, T+B-1]$. This implies that the interval $[-T,-1]$ is free of any erasure so that there is only one burst and isolated erasure in the interval $[-T, T+B-1]$, which is of length $2T+B$. Since the memory of the code equals $T$, any erased
symbols before $t < -T$ will not affect the decoder. Thus we assume that there are no erasures before $t=0$.

We further consider two cases as stated below.

\subsubsection*{{\bf Burst and Isolated Erasures Recovered Simultaneously}}
In this case,  the burst and the isolated erasures are close enough such that all the $\bv[\cdot]$ symbols are recovered simultaneously. This case is illustrated in Fig.~\ref{fig:Partial_CaseA1}. The isolated erasure happens at time $t$ where $B \le t < \frac{B}{B+1}\Delta$. The recovery of the erased symbols proceeds as follows:
\begin{enumerate}
\item Recover $\{\bv[0],\ldots, \bv[B-1], \bv[t]\}$ at time $\tau=\Delta-1$ using the $(v+u+s,v)$ Strongly-MDS code $\cC_{12}$ in the interval $[0,\Delta-1]$.
\item Recover $\{\bu[0],\ldots, \bu[B-1], \bu[t]\}$ at time $\tau= \Delta, \ldots, \Delta+B-1$ and $\tau = t+\Delta$ respectively from the associated parity checks $\bq[\cdot]$.
\end{enumerate}

To justify the recovery of $\bv[0],\ldots, \bv[B-1], \bv[t]$ in the first step we consider the available parity checks of $\cC_{12}$ in the interval $[0,\Delta-1]$.
We first note that the interfering symbols $\bu[\cdot]$ in this interval are  not erased and can be cancelled out from $\bq[\cdot]$ to recover the parity checks $\bp_1[\cdot]$.
We apply Lemma~\ref{lem:twobursts} with $B_1=B,$ $B_2=1$, $r=t$ and $R = R_{12}$ and $j=\Delta-1$. Note that
$t < \frac{B}{B+1}\Delta$ also satisfies $t < \frac{B_1}{1-R_{12}}$ since $R_{12} = \frac{\Delta-B-1}{\Delta}$ from~\eqref{eq:r12-PRC}. Thus the first condition in~\eqref{eq:twobursts-cond} in Lemma~\ref{lem:twobursts} is satisfied. Furthermore note that $(1-R_{12})(j+1)=B+1$ and thus the second condition in~\eqref{eq:twobursts-cond} in Lemma~\ref{lem:twobursts} is also satisfied. Thus Lemma~\ref{lem:twobursts} applies and the recovery of $\bv[0],\ldots, \bv[B-1],\bv[t]$ at time $\Delta-1$ follows. 

To justify the recovery of $\{\bu[0],\ldots, \bu[B-1], \bu[t]\}$, recall that $\bq[i] = \bu[i-\Delta] + \bp_1[i]$. Since all the $\bv[\cdot]$ symbols have been recovered in step (1), the associated parity checks $\bp_1[i]$ can be computed and cancelled by the decoder to recover the $\bu[\cdot]$ symbols as claimed.

As a final remark we note that all the erased symbols are recovered for the above erasure pattern.

\subsubsection*{{\bf Burst and Isolated Erasures Recovered Separately}}
In this case, there is a sufficiently large gap between the burst and the isolated erasures so that the $\bv[\cdot]$ symbols of the erasure burst are recovered before the isolated erasure takes place. This case is illustrated in Fig.~\ref{fig:Partial_CaseA3}. The isolated erasure happens at time $t\geq B\frac{ \Delta}{B+1}$. 
The recovery of the erased symbols proceeds as follows:
\begin{enumerate}
\item Recover $\{\bv[0],\ldots, \bv[B-1]\}$ by time $\tau = \left\lceil {B\frac{ \Delta}{B+1}}\right\rceil -1$ using the $(v+u+s,v)$ Strongly-MDS code $\cC_{12}$ in the interval $[0,\tau]$.
\item Recover $\{\bu[0],\ldots, \bu[t-\Delta-1]\}$ from $\bq[\Delta],\ldots, \bq[t-1]$ respectively by cancelling the interfering $\bp_1[\cdot]$ symbols.
\item Recover $\bv[t]$ by time $\tau = t+T-\Delta+1$ using the $(v+s, v)$ Strongly-MDS code $\cC_2$ in the interval $[t, t+T-\Delta+1]$.
\item Recover $\{\bu[t-\Delta+1], \ldots, \bu[B-1], \bu[t]\}$ from $\bq[t+1],\ldots, \bq[B+\Delta-1], \bq[t+\Delta]$ by cancelling the interfering $\bp_1[\cdot]$ symbols.
\end{enumerate}

To justify the recovery of  $\{\bv[0],\ldots, \bv[B-1]\}$ in the first step above, we consider the available parity checks of $\cC_{12}$ in the interval $[0,\tau]$.
Note that the interfering $\bu[\cdot]$ symbols in this interval are not erased and can be cancelled out from $\bq[\cdot]$ to recover the underlying parity checks $\bp_1[\cdot]$. Furthermore,
\begin{align}
(1-R_{12})(\tau+1) \ge (1-R_{12})\left(\frac{B}{B+1}\Delta\right) = B
\end{align}
where we substituted~\eqref{eq:r12-PRC} for $R_{12}$. Thus using property P2 in Lemma~\ref{lem:mds} we recover the $\bv[0],\dots,\bv[B-1]$ by time $\tau$
 as stated. To justify step (2) above note that $\bq[i] = \bu[i-\Delta] + \bp_1[i]$ and the interfering $\bp_1[i]$ are only functions of $\bv[\cdot]$ symbols that have either been recovered in step (1) or are not erased. To justify (3) we consider the interval $[t, t+T-\Delta+1]$ and consider the parity checks of $\cC_2$ in this interval. Note that using
\eqref{eq:r2-PRC} we have:\begin{align}
(1-R_2)(T-\Delta+2) \ge 1
\end{align}holds and hence using property P2 in Lemma~\ref{lem:mds} we recover $\bv[t]$ by time $t+T-\Delta+1$. To justify step (4), note that once $\bv[t]$ is recovered in step (3),  the parity checks $\bp_1[t+1],\ldots, \bp_1[B+\Delta], \bp_1[t+\Delta]$ can be computed and cancelled from the associated $\bq[\cdot]$ symbols, and the claim follows.

As a final remark, we note that when if $t \in [\Delta,\Delta+B-1]$ the symbol $\bu[t-\Delta]$ which is erased in the first burst, cannot be recovered  as its repeated copy at time $t$ is also erased. This is the only symbol that cannot be recovered in the above erasure pattern.

\subsection{Isolated Erasure followed by an Erasure Burst}

We assume without loss of generality that the isolated erasure happens at time zero and that the burst erasure happens at time $t >0$. Since the isolated erasure precedes the erasure burst, it follows that the erasure burst must begin in the interval $t \in [1, T]$ from Def.~\ref{def:associated}. This implies that there cannot be any erasure in the interval $[-T, -1]$ since the interval $[-T, T+B-1]$ must have only one isolated erasure and one erasure burst. Since the memory of the code equals $T$, any erased symbol before time $t = -T$ will not affect the decoder. Thus in what follows we assume that there are no erasures before time $0$.

 This class of patterns is sub-divided into two cases discussed below.

\subsubsection*{{\bf Isolated and Burst Erasures Recovered Simultaneously}}
In this case the burst erasure and the isolated erasure are close enough so that all the $\bv[\cdot]$ symbols are simultaneously recovered. This case is illustrated in Fig.~\ref{fig:Partial_CaseB1}. The burst erasure begins at time $t < \frac{\Delta}{B+1}$. The recovery of the erased symbols proceeds as follows:
\begin{enumerate}
\item Recover $\{\bv[0],\bv[t],\ldots, \bv[t+B-1]\}$ using the $(v+u+s,v)$ Strongly-MDS code $\cC_{12}$ in the interval $[0,\Delta-1]$.
\item Recover $\{\bu[0], \bu[t],\ldots, \bu[t+B-1]\}$ at time $\tau = \Delta, t+\Delta, \ldots, t+B+\Delta-1$ respectively from the associated parity checks $\bq[\cdot]$. 
\end{enumerate}

To justify step (1) we note that the interfering $\bu[\cdot]$ symbols in $\bq[\cdot]$ in the interval $[0,\Delta-1]$  are not erased and can be cancelled to recover $\bp_1[\cdot]$. We apply Lemma~\ref{lem:twobursts} to code $\cC_{12}$ in the interval $[0,\Delta-1]$ using $B_1=B,$ $B_2=B$ and $r=t$. Note that by assumption on $t$ and from~\eqref{eq:r12-PRC} we have that
\begin{align}
r < \frac{\Delta}{B+1} = \frac{1}{1-R_{12}}
\end{align}
and thus the first condition in~\eqref{eq:twobursts-cond} holds.  Furthermore from~\eqref{eq:r12-PRC} we also have that $(1-R_{12})\Delta = B+1$ and thus the second condition in~\eqref{eq:twobursts-cond} also holds. Thus  Lemma~\ref{lem:twobursts} guarantees the recovery of $\{\bv[0],\bv[t],\ldots, \bv[t+B-1]\}$ by time $\tau=\Delta-1$.

To justify step (2), note that there are no further erasures in the interval $[\Delta, \Delta +t + B-1]$. Since all the erased $\bv[\cdot]$
symbols are recovered in step (1), the decoder can compute $\bp_1[\Delta],\bp_1[i+\Delta],\dots,\bp_1[i+B+\Delta-1]$ and subtract them from the corresponding $\bq[\cdot]$ symbols to recover $\bu[0],\bu[i],\dots,\bu[i+B-1]$, respectively with a delay of $\Delta \leq T$.

As a final remark we note that all the erased symbols are fully recovered in this erasure pattern.

\subsubsection*{{\bf Isolated and Burst Erasures Recovered Separately}}
In this case the gap between the isolated erasure and the burst erasure is sufficiently large so that  $\bv[0]$ is recovered before the burst erasure begins.  This case is illustrated in Fig.~\ref{fig:Partial_CaseB3}. In this case we have that $t \ge \frac{\Delta}{B+1}$.
The recovery of the erased symbols proceeds as follows:
\begin{enumerate}
\item Recover the symbol $\bv[0]$ by time $\tau = \left\lceil\frac{\Delta}{B+1}\right\rceil-1$ using the $(v+u+s, v)$ Strongly-MDS code $\cC_{12}$
in the interval $[0,\tau]$.
\item Recover the symbols $\bv[t],\ldots, \bv[t+B-1]$ by time $t+\Delta-1$ using the $(v+u+s, v)$ Strongly-MDS code $\cC_{12}$
in the interval $[t,t+\Delta-1]$.
\item Recover $\bu[t],\ldots, \bu[t+B-1]$ from $\bq[t+\Delta],\ldots, \bq[t+B+\Delta-1]$ respectively by cancelling the associated $\bp_1[\cdot]$ symbols. 
\end{enumerate}
To justify the above steps note the interfering $\bu[\cdot]$ symbols in $\bq[\cdot]$ for $t \in [0, \Delta-1]$ are not erased and can be cancelled out to recover $\bp_1[\cdot]$.   In step (1), it suffices to use P1 in Lemma~\ref{lem:mds} and show that $\bv[0]$ is recovered by time $\tau = \left\lceil\frac{\Delta}{B+1}\right\rceil-1$. Note that
\begin{align}
(1-R_{12})(\tau+1) \ge (1-R_{12})\frac{\Delta}{B+1} =1
\end{align}
where we substitute~\eqref{eq:r12-PRC} for $R_{12}$ above. Since by assumption on $t$, $\bv[0]$ is the only symbol erased in the interval $[0,\tau]$ it follows that $\bv[0]$ is recovered by this time.

To justify step (2), consider the interval $[t, t+\Delta-1]$ and recall that the erasure burst spans $[t,t+B-1]$. Furthermore even though $\bv[0]$ has been recovered in step (1) and its effect can be cancelled out, the symbol $\bu[0]$ appears in $\bq[\Delta]$ and may contribute to one additional erasure when $t \le \Delta$. In this case, we assume that a total of ${B+1}$ erasures occur in the above stated interval. We use Lemma~\ref{lem:twobursts} applied to the code $\cC_{12}$ with $B_1=B$ and $B_2=1$, in order to show the recovery of $\bv[t],\ldots, \bv[t+B-1]$. Note that the first condition in~\eqref{eq:twobursts-cond} is satisfied since 
\begin{align}
\Delta - t \le \Delta \frac{B}{B+1} = \frac{B}{1-R_{12}}
\end{align}
is satisfied and the second condition is satisfied as well since $(1-R_{12})\Delta = B+1$. By  time $t+\Delta-1$, the decoder has recovered all the erased $\bv[\cdot]$ symbols. If instead we had $t > \Delta$ then $\bu[0]$ can be recovered at time $t=\Delta$ and there remain only $B$ erasures in the interval $[t,t+\Delta-1]$, so the recovery of $\bv[t],\ldots, \bv[t+B-1]$ again follows. 

Finally to recover the $\bu[\cdot]$ symbols in the interval $[t,t+\Delta-1],$ we compute the parity check symbols $\bp_1[\cdot]$ in the interval $[t+\Delta,t+B+\Delta-1],$  subtract them from the corresponding $\bq[\cdot]$ symbols, and recover $\bu[t],\dots,\bu[t+B-1]$ respectively as stated in step (4).

As a final remark we note that the symbol $\bu[0]$ my not be recovered if its repeated copy at time $\Delta$ is erased as part of the erasure burst. Thus we may have one unrecovered symbol for the above erasure pattern.

This completes the proof of the decoder in Theorem~\ref{thm:PRC}.


\bibliographystyle{IEEEtran}
\bibliography{sm}

\end{document}